\numberwithin{equation}{section} 
\newcommand{\defeq}{\vcentcolon=}
\theoremstyle{definition}
\newtheorem{definition}{Definition}
\theoremstyle{plain}
\newtheorem{lemma}{Lemma}
\newtheorem{theorem}{Theorem}
\newcolumntype{C}[1]{>{\centering\let\newline\\\arraybackslash\hspace{0pt}}m{#1}}
\def\BState{\State\hskip-\ALG@thistlm}
\newcommand\tc{t_{\mathrm{cut}}}
\newcommand{\ssec}[1]{\textls{\textsc{\textbf{#1}}}}
\begin{document}

\title{Performance metrics for the continuous distribution of entanglement\\in multi-user quantum networks}
\date{\today}

\author{\'{A}lvaro G. I\~{n}esta}\email{a.gomezinesta@tudelft.nl}
\affiliation{QuTech, Delft University of Technology, Lorentzweg 1, 2628 CJ Delft, The Netherlands}
\affiliation{EEMCS, Quantum Computer Science, Delft University of Technology, Mekelweg 4, 2628 CD Delft, The Netherlands}
\affiliation{Kavli Institute of Nanoscience, Delft University of Technology, Lorentzweg 1, 2628 CJ Delft, The Netherlands}
\author{Stephanie Wehner}
\affiliation{QuTech, Delft University of Technology, Lorentzweg 1, 2628 CJ Delft, The Netherlands}
\affiliation{EEMCS, Quantum Computer Science, Delft University of Technology, Mekelweg 4, 2628 CD Delft, The Netherlands}
\affiliation{Kavli Institute of Nanoscience, Delft University of Technology, Lorentzweg 1, 2628 CJ Delft, The Netherlands}
\begin{abstract}
Entangled states shared among distant nodes are frequently used in quantum network applications.
When quantum resources are abundant, entangled states can be continuously distributed across the network, allowing nodes to consume them whenever necessary. This continuous distribution of entanglement enables quantum network applications to operate continuously while being regularly supplied with entangled states.
Here, we focus on the steady-state performance analysis of protocols for continuous distribution of entanglement.
We propose the virtual neighborhood size and the virtual node degree as performance metrics.
We utilize the concept of Pareto optimality to formulate a multi-objective optimization problem to maximize the performance.
As an example, we solve the problem for a quantum network with a tree topology.
One of the main conclusions from our analysis is that the entanglement consumption rate has a greater impact on the protocol performance than the fidelity requirements.
The metrics that we establish in this manuscript can be utilized to assess the feasibility of entanglement distribution protocols for large-scale quantum networks.
\end{abstract}

\maketitle

\section{Introduction}
Quantum networks are expected to enable multi-party applications that are provably impossible by using only classical information. These applications range from basic routines, such as quantum teleportation \cite{Bennett1993,Gottesman1999}, to more complex tasks, such as quantum key distribution \cite{Ekert1991,Bennett1992} and entanglement-assisted distributed sensing \cite{Xia2020,Grace2021}.
Some of these applications may operate in the \emph{background} (e.g., a quantum key distribution subroutine that is continuously generating secret key), as opposed to \emph{sporadic} applications that are executed after the users actively trigger them.
Most quantum network applications consume shared entanglement as a basic resource.
\emph{Entanglement distribution protocols} are used to generate and share multipartite entanglement among remote parties.
There are two main approaches to distribute entanglement among the nodes~\cite{Chakraborty2019, Illiano2022}:
\begin{itemize}
	\item Protocols for \textit{on-demand distribution of entanglement} distribute entangled states only after some nodes request them. The request may involve some quality-of-service requirements (e.g., a minimum quality of the entanglement). This type of protocol typically involves solving a routing problem and scheduling a set of operations on a subset of nodes~\cite{Briegel1998, Chakraborty2019, Victora2020, Vardoyan2021b, Inesta2023}.
	\item Protocols for \textit{continuous distribution of entanglement} (CD protocols) continuously distribute entangled states among the nodes. These entangled states can be consumed by the nodes whenever they need them. This allows \emph{background applications} to continuously operate and consume entanglement in the background.
	In this work, we focus on CD protocols that provide entanglement to background applications.
\end{itemize}

On-demand distribution is generally more efficient, since entanglement is only produced when it is needed.
This makes on-demand distribution more suitable for quantum networks where the quantum resources are limited (e.g., networks with a small number of qubits per node).
As a consequence, previous work, both theoretical \cite{Briegel1998, Chakraborty2019, Victora2020, Vardoyan2021b, Inesta2023} and experimental \cite{Ursin2007, Moehring2007, Slodicka2013, Bernien2013, Hensen2015, Humphreys2018, Pompili2021}, has mostly focused on this type of protocol in quantum networks with a simple topology or with very limited number of qubits per node.

On-demand distribution requires a scheduling policy that tells the nodes when to perform each operation based on specific demands.
If the number of nodes involved in the generation of entanglement is large, the scheduling policies become more complex.
In contrast, the continuous distribution of entanglement does not necessarily require an elaborate application-dependent schedule.
Therefore, CD protocols are expected to allocate resources faster and prevent traffic congestion in large quantum networks.
Here, we focus on the performance evaluation of CD protocols.
Specifically, we consider protocols that distribute bipartite entanglement among remote nodes. We refer to shared bipartite entanglement as an \emph{entangled link}.
We focus on entangled links because this is a basic resource needed in many quantum network applications \cite{Ekert1991,Bennett1992,BenOr2006,Leichtle2021}, where nodes generally need many copies of a bipartite entangled state with high enough quality.
Even when multipartite entanglement is required, it can be generated using entangled links~\cite{Kruszynska2006,Pirker2018,Meignant2019,Bugalho2021}.

We consider a quantum network with $n$ nodes. Some pairs of nodes are \emph{physical neighbors}: they are connected by a physical channel, such as optical fibers \cite{Yoshino2013,Stephenson2020} or free space \cite{Ursin2007,Sidhu2021}.
This is depicted in Figure~\ref{fig.quantum-network}.
To generate long-distance entanglement, we assume the nodes can perform the following basic operations:
($i$) heralded generation of entanglement between physical neighbors \cite{Barrett2005,Bernien2013}, which successfully produces an entangled link with probability $p_\mathrm{gen}$ and otherwise raises a failure flag;
($ii$) entanglement swaps \cite{Zukowski1993, Duan2001, Sangouard2011}, which consume one entangled link between nodes A and B and another entangled link between nodes B and C to generate a single link between A and C with probability $p_\mathrm{s}$;
($iii$) removal of any entangled link that has existed for longer than some cutoff time $\tc$ to prevent the existence of low-quality entanglement in the network \cite{Collins2007, Rozpedek2018, Khatri2019, Rozpedek2019, Li2020};
and ($iv$) consumption of entangled links in background applications at some constant rate $p_\mathrm{cons}$.
Note that the choice of cutoff time is determined by the minimum fidelity required by the applications, $F_\mathrm{app}$.
We allow for multiple entangled links to be shared simultaneously between the same pair of nodes (see Figure~\ref{fig.quantum-network}).

\begin{figure}[t]
\captionsetup[subfigure]{justification=centering}
\centering
  \centering
  \includegraphics[width=\linewidth]{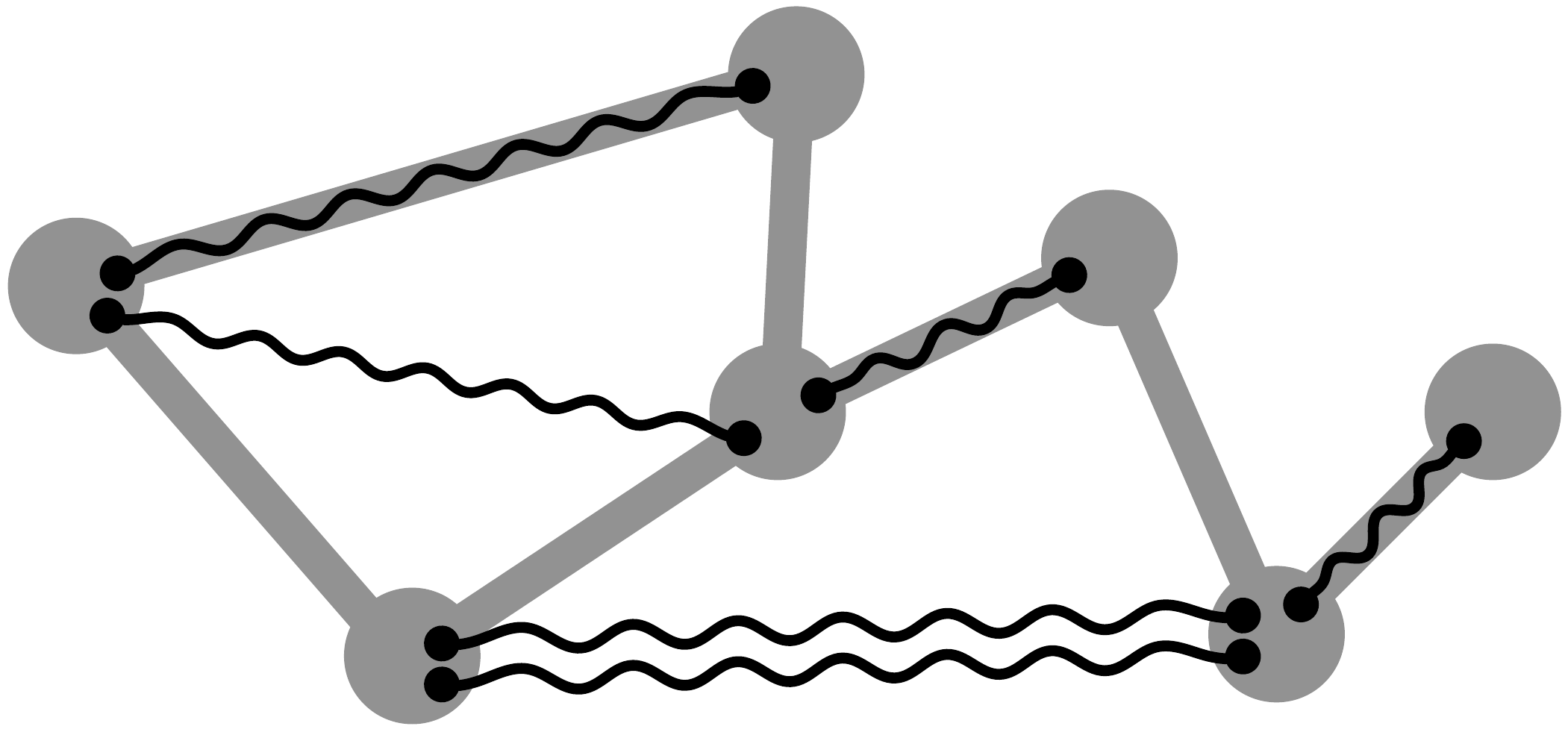}
\caption{\textbf{Illustration of a seven-node quantum network.}
The nodes are represented as gray circles, and physical channels connecting neighboring nodes are represented as gray lines.
Entangled links are represented as black lines connecting two occupied qubits (small black circles).
The physical topology is static, while the entangled links are continuously created, discarded, and consumed.}
\label{fig.quantum-network}
\end{figure}

Evaluating the performance of a CD protocol is a fundamentally different problem to evaluating the performance of on-demand protocols, since each type of protocol serves a different purpose.
In on-demand protocols, one generally wants to maximize the rate of entanglement distribution among a specific set of end nodes and the quality of the entanglement (or some combined metric, such as the secret key rate \cite{Gottesman2004}).
By contrast, the goal of a CD protocol is ($i$) to distribute entanglement among the nodes such that it can be continuously consumed in background applications and ($ii$) to ensure that some entanglement is available for sporadic applications.
To quantify the performance of a CD protocol, we need metrics that take these goals into account.
A simple approach is to analyze the configuration of entangled links that a CD protocol can achieve.
This configuration is time-dependent due to the dynamic nature of the entangled links.
Most previous work aimed at describing the connectivity of large-scale quantum networks disregards the time-dependence of the system.
As a consequence, previous results do not depend explicitly on parameters that determine the evolution of the entanglement, such as the coherence time.
For example, in Refs. \cite{Brito2020} and \cite{Brito2020a}, the authors study a graph in which the edges are entangled links that exist at a specific instant.
Some authors have described the connectivity of a quantum network using percolation theory \cite{Acin2007, Cuquet2009, Perseguers2010, Cuquet2011, Wu2011, Choi2019, Meng2021}, which also disregards the time-evolution of the entangled states, and often assumes specific topologies and some form of pre-shared entanglement.
Another line of related work is the use of pre-shared entanglement for on-demand applications \cite{Pouryousef2022, Kolar2022}.

In this paper, we consider quantum networks with arbitrary topologies where entanglement is continuously being generated and consumed.
We propose metrics to evaluate the performance of CD protocols. These metrics take into account the time-dependence of the system and can be used to optimize the protocol performance.
\\

Our main contributions are the following:
\begin{itemize}
	\item We define metrics to evaluate the performance of CD protocols in heterogeneous quantum networks with an arbitrary topology, namely, we define the \emph{virtual neighborhood size} and the \emph{virtual node degree}. These metrics provide information about the number of nodes that are able to continuously run background applications and about the number of existing backup entangled links to run sporadic applications.
	\item We provide analytical and numerical tools to compute the performance metrics.
	\item We provide a mathematical framework to maximize the virtual neighborhood size of every node in a heterogeneous network, while providing some minimum quality-of-service requirements (e.g., a minimum number of backup links). We do this via the concept of Pareto optimality.
	\item We study the relation between the steady-state performance of the entanglement distribution protocol and the application requirements (minimum fidelity and link consumption rate) in a quantum network with a tree topology.
\end{itemize}

Our main findings are the following:
\begin{itemize}
	\item The expected virtual neighborhood size rapidly drops to zero when the entanglement consumption rate increases beyond the entanglement generation rate.
	\item In a quantum network with a tree topology and with high entanglement generation rate,
	the consumption rate has a stronger effect on the virtual neighborhood size than the minimum fidelity required by the applications.
	In other words, background applications that require a high consumption rate affect the CD protocol performance more than applications that require a high fidelity.
	\item The set of protocol parameters that maximize the virtual neighborhood size is node-dependent. Consequently, in heterogeneous networks with an arbitrary topology we need to solve a multi-objective optimization problem.
\end{itemize}

The structure of the paper is as follows.
In Section \ref{sec.model}, we define the network model (physical topology, quantum operations, and quantum resources).
In Section \ref{sec.cdprotocol}, we provide an example of a CD protocol.
In Section \ref{sec.virtual}, we formally define the virtual neighborhood and the virtual node degree.
We apply these definitions to evaluate the performance of a CD protocol using analytical and numerical methods.
As an example, we analyze a CD protocol in a quantum network with a tree topology.
In Section \ref{sec.discussion}, we discuss the implications and limitations of our work.

\section{Network model}\label{sec.model}
In this Section we describe the physical topology of the network and the quantum operations that the nodes can perform. We also discuss the background applications requirements and the management of quantum resources at each node.

We consider a quantum network with $n$ \emph{nodes} (see Figure \ref{fig.quantum-network}).
Nodes can store quantum states in the form of qubits, and they can manipulate them as we describe below.
Additionally, some nodes are connected by a \emph{physical channel} over which they can send quantum states.
Qubits can be realized with different technologies, such as nitrogen vacancy (NV) centers \cite{Bernien2013, Hensen2015, Humphreys2018, Rozpedek2019, Pompili2021}, trapped ions \cite{Moehring2007, Slodicka2013}, or neutral atoms \cite{Welte2018}, while physical channels can be realized with optical fibers \cite{Yoshino2013,Stephenson2020} or free space \cite{Ursin2007,Sidhu2021}.
\\

\ssec{Physical topology.}
Two nodes are \emph{physical neighbors} if they share a physical channel.
The \emph{physical node degree} $d_i$ of node $i$ is the number of its physical neighbors.
The set of nodes and physical channels constitute the \emph{physical topology} of the quantum network.
Early quantum networks are expected to have simple physical topologies, such as a chain where each node is connected to two other nodes \cite{Briegel1998, Coopmans2021, Inesta2023} and a star topology where all nodes are only connected to a central node \cite{Vardoyan2021,Vardoyan2021b}.
More advanced networks are expected to display a more complex physical topology, such as a dumbbell structure with a backbone connecting two metropolitan areas.

The definitions and methods we develop in this work are general and apply to an arbitrary physical topology, which can be described using an \emph{adjacency matrix} $A$ (element $A_{ij}$ is 1 if nodes $i$ and $j$ are physical neighbors and 0 otherwise).
To illustrate how our methods can be valuable and effective, we apply them to a quantum network with a tree topology as an example.
In a tree, any node can be reached from any other node by following exactly one path.
This topology is particularly relevant as it has been shown that it requires a reduced number of qubits per node to avoid traffic congestion~\cite{Choi2023}.
\begin{definition}
	A \emph{($d$,$k$)-tree} network is an undirected unweighted graph where nodes are distributed in $k$ levels, with $d^l$ nodes in level $l\in{0,1,\dots,k-1}$. Each node in level $l$ is connected to $d$ nodes in the ($l+1$)-th level, and is only connected to one node in the ($l-1$)-th level.
\end{definition}
The total number of nodes in a ($d$,$k$)-tree is $n=(d^k-1)/(d-1)$, and the network diameter is $2k$.
A (2,3)-tree network is depicted in Figure \ref{fig.23tree}.
\\

\ssec{Entanglement distribution.}
The aim of a CD protocol is to distribute shared bipartite entangled states, which we call entangled links.
Ideally, entangled links are maximally entangled states. However, entanglement generation and storage are generally noisy processes. Consequently, we assume that entangled links are Werner states \cite{Werner1989}: maximally entangled states that have been subjected to a depolarizing process, which is a worst-case noise model \cite{Dur2005}.
Werner states can be written as
\begin{equation}
    \rho = \frac{4F-1}{3}\ketbra{\phi^+} + \frac{1-F}{3} \mathbb{I}_4,
\end{equation}
where $\ket{\phi^+}=(\ket{00}+\ket{11})/\sqrt{2}$ is a maximally entangled state, $F$ is the fidelity of the Werner state to the state $\ket{\phi^+}$, and $\mathbb{I}_m$ is the $m$-dimensional identity.
Here, the fidelity of a mixed state $\rho$ to a pure state $\ket{\phi}$ is defined as
\begin{equation}
    F(\rho,\ket{\phi}) \defeq \bra{\phi}\rho\ket{\phi}.
\end{equation}

\begin{figure}[t]
\captionsetup[subfigure]{justification=centering}
\centering
  \centering
  \includegraphics[width=0.6\linewidth]{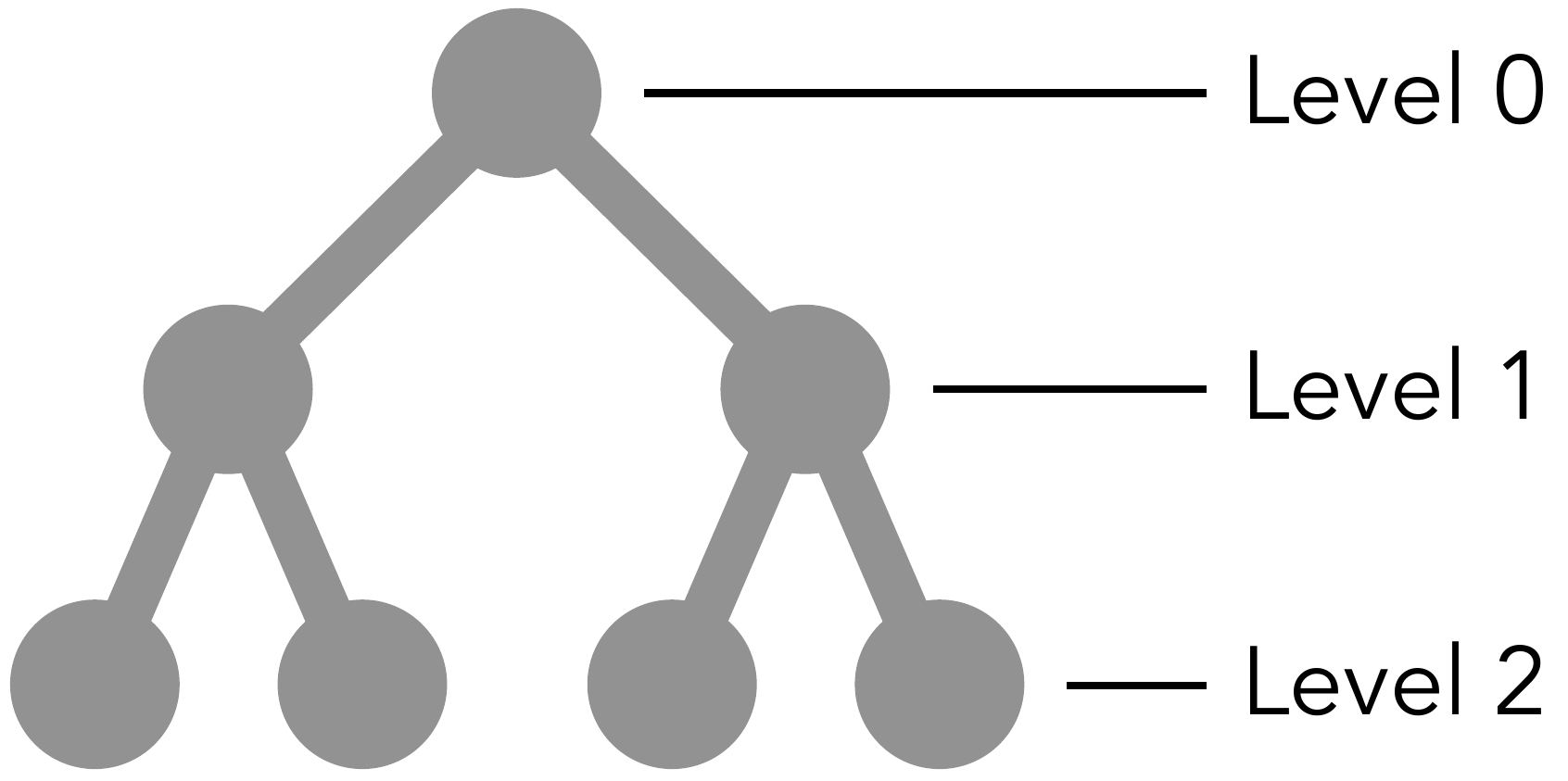}
\caption{\textbf{(2,3)-tree network.} Each node is represented as a gray circle and is connected to two other nodes in a lower level.}
\label{fig.23tree}
\end{figure}

We consider nodes that operate as first or second generation quantum repeaters \cite{Muralidharan2016}: physical neighbors generate entangled links via \emph{heralded entanglement generation} using two-way signaling.
This operation produces an entangled link with probability $p_\mathrm{gen}$ and otherwise raises a failure flag \cite{Barrett2005,Bernien2013}.
The fidelity of newly generated links, $F_\mathrm{new}$ is generally a function of $p_\mathrm{gen}$.
For example, in the single-photon protocol \cite{Hermans2023}, $F_\mathrm{new} = 1 - \lambda p_\mathrm{gen}$, for some $0 \leq \lambda \leq 1$ (as discussed in ref. \cite{Davies2023}, the value of $\lambda$ can be tuned by performing a batch of entanglement attempts as a single entanglement generation step \cite{Pompili2021a}).

Long-distance entanglement between physically non-neighboring nodes can be generated using \emph{entanglement swapping} \cite{Zukowski1993, Duan2001, Sangouard2011}, which consumes an entangled link between nodes A and B, with fidelity $F_\mathrm{AB}$, and another one between B and C, with fidelity $F_\mathrm{BC}$, to produce a link between A and C, with fidelity $F_\mathrm{AC} \leq F_\mathrm{AB}, F_\mathrm{BC}$. This operation succeeds with probability $p_\mathrm{swap}$ (when it fails, both input links are lost and nothing is produced).
Note that entanglement swapping also requires two-way classical signaling. See Appendix \ref{app.model} for further details on entanglement swapping.
\\

\ssec{Quantum applications.}
The main goal of a CD protocol is to provide a continuous supply of entanglement for nodes to run applications without the need for explicitly demanding entanglement.
We assume that each pair of nodes that share entanglement is continuously running quantum applications in the background, consuming entangled links at a rate $p_\mathrm{cons}$.
For simplicity, we assume $0\leq p_\mathrm{cons} \leq 1$.
Since we will assume time to be slotted (see Section \ref{sec.cdprotocol}), a consumption rate between zero and one can be interpreted as the probability that, in each time slot, two nodes that share some entangled links consume one link.
We consider entanglement purification \cite{Dur2007, Hartmann2007, Victora2020} as an application and therefore we omit it in our model (purification at the physical link level can be included in our model by modifying $p_\mathrm{gen}$ and $F_\mathrm{new}$ accordingly; see Appendix \ref{app.model} for further details).

Background applications require entanglement of a high enough quality. Specifically, we assume that they need entangled links with fidelity larger than $F_\mathrm{app}$.
\\

\ssec{Mitigating decoherence.}
The operations involving entangled links and the storage in memory have a negative impact on the quality of the links.
Each entanglement swap produces a link with a lower fidelity than the input links \cite{Munro2015}.
To prevent the fidelity from dropping too low, we must limit the \emph{maximum swap distance}, defined as the maximum number of short-distance links that can be combined into longer distance entanglement via swaps. We denote this maximum number of links as $M$. Two nodes can only share entanglement if they are at most $M$ physical links away.

Additionally, the fidelity of entangled links stored in memory decreases over time due to couplings to the environment \cite{Dur2005,Chirolli2008}, making old links unusable for applications that require high fidelity states.
A simple technique to alleviate the effects of noisy storage consists in imposing a cutoff time $t_\mathrm{cut}$: any link that has been stored for longer than the cutoff time must be discarded \cite{Rozpedek2018}.

To ensure that the fidelity of every entangled link is above $F_\mathrm{app}$ in a network where new links are generated with fidelity $F_\mathrm{new}$, it is enough to choose the values of $t_\mathrm{cut}$ and $M$ such~that~\cite{Inesta2023}
\begin{equation}\label{eq.cutoffcondition}
    t_\mathrm{cut} \leq -T \ln\Bigg(\frac{3}{4F_\text{new}-1} \Big( \frac{4F_\text{app}-1}{3} \Big)^{\frac{1}{M}} \Bigg),
\end{equation}
where $T$ is a parameter that characterizes the exponential decay in fidelity of the whole entangled state due to the qubits being stored in noisy memories.
In our analysis, we choose the largest cutoff that satisfies (\ref{eq.cutoffcondition}).
For further details on the noise model, see Appendix \ref{app.model}.
\\

\ssec{Limited quantum resources.}
Nodes have a limited number of qubits.
These qubits can be used for communication (short coherence times) or for storage (long coherence times) \cite{Benjamin2006,Lee2022}.
Here, we assume a simplified setup where every qubit can be used for entanglement generation and for storage of an entangled link.
Intuitively, nodes with a larger number of physical neighbors should have more resources available, to establish entanglement with many neighbors simultaneously.
We assume that the maximum number of qubits that node $i$ can store is $d_i r$, where $d_i$ is the physical node degree of node $i$ and $r\in\mathbb{N}$ is a hardware-dependent parameter that limits the maximum number of qubits per node.

We make an additional simplifying assumption: each qubit can only generate entanglement with a fixed neighboring node. The physical motivation behind this assumption is the lack of optical switches in the node.
This assumption allows us to uniquely identify each qubit using a three-tuple address $(i,j,m)$. The first index, $i\in\{0,...,n-1\}$, corresponds to the node holding the qubit. The second index, $j\in\{0,...,n-1\}$, is the node with which the qubit can generate entanglement ($i\neq j$). The third index, $m\in\{0,...,r-1\}$, is used to distinguish qubits that share the same indices $i$ and $j$. A graphical example is shown in Figure~\ref{fig.qubitaddresses}.
\\

\begin{figure}[t]
\captionsetup[subfigure]{justification=centering}
\centering
  \centering
  \includegraphics[width=0.7\linewidth]{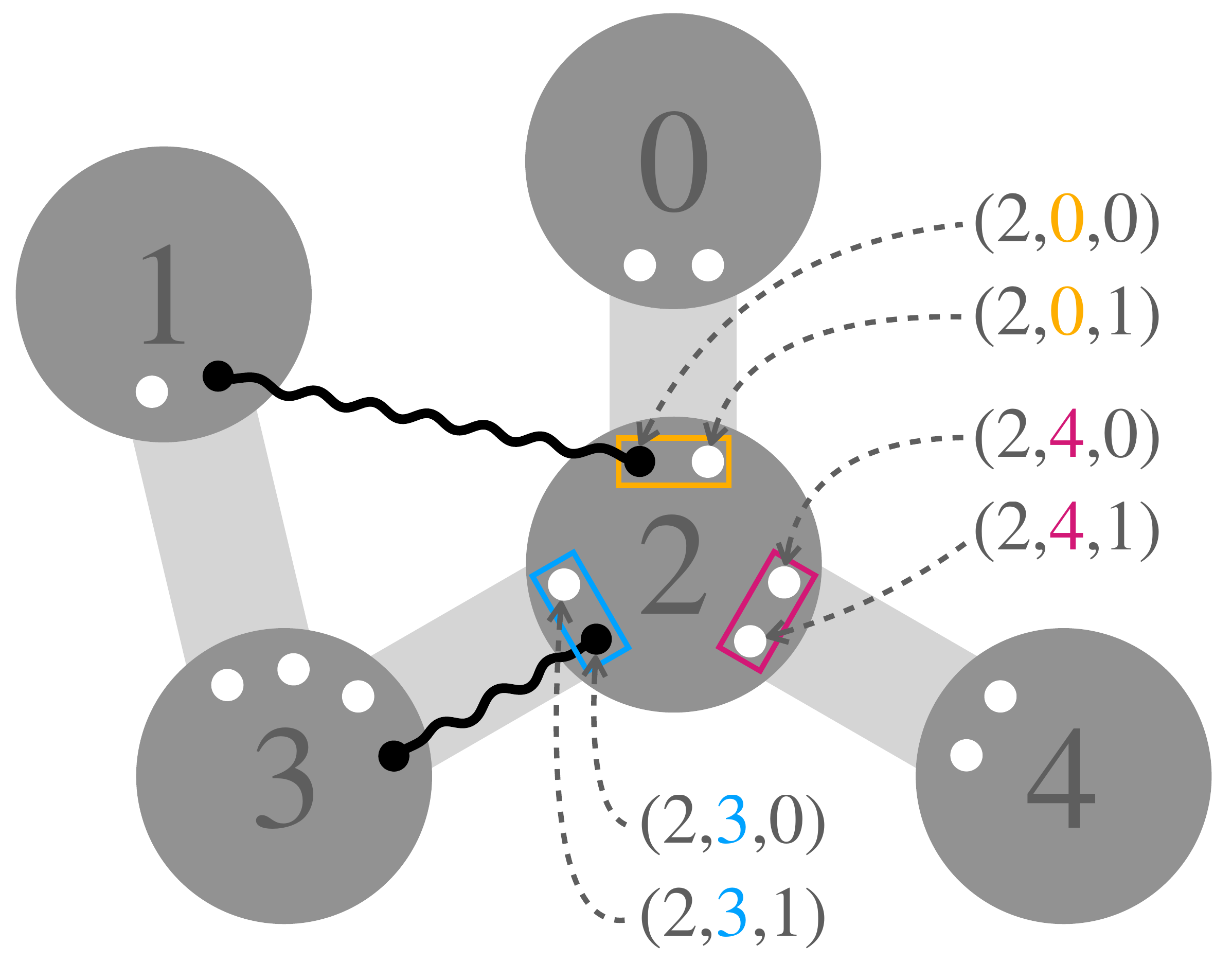}
\caption{\textbf{Qubit addresses.} Each qubit is identified by a qubit address consisting of three values $(i,j,m)$: $i$ is the node holding the qubit, $j$ is the neighboring node that can generate entanglement with that qubit, and $m$ is used to distinguish qubits with the first two indices $i$ and $j$. In this example, each node has two qubits per physical neighbor, i.e., $r=2$.}
\label{fig.qubitaddresses}
\end{figure}

\section{Protocol for continuous distribution of entanglement}\label{sec.cdprotocol}
The operations discussed above --entanglement generation, swaps, entanglement consumption, and application of cutoffs-- are performed following a specific protocol for continuous distribution of entanglement (CD protocol).
Here we consider a basic CD protocol that we will use to test our performance optimization tools.
We assume a synchronous protocol: time is divided into non-overlapping time slots and each operation is allocated within a time slot.
This is a common assumption in the field of quantum networking (see, e.g., Refs. \cite{Skrzypczyk2021, Inesta2023}), since nodes generally have to agree to perform synchronized actions for heralded entanglement generation.
In what follows, we focus on the Single Random Swap (SRS) protocol, which is described in Algorithm \ref{alg.srs}.
In this protocol, ($i$) entanglement generation is attempted sequentially on every physical link; ($ii$) swaps are performed using links chosen at random; and ($iii$) every pair of nodes that shares an entangled link consumes one link per time step with probability $p_\mathrm{cons}$.
The protocol has a single parameter, $q\in[0,1]$, which determines how many nodes must perform a swap at each time step (if $q=0$, no swaps are performed; if $q=1$, every node must perform a swap if possible; if $0<q<1$, a random subset of nodes may perform swaps).
In step 3.2 of the SRS protocol, the condition $A_{jk}=0$, ensures that swaps will generally not connect physical neighbors.

In step 5, we remove links that have too low fidelity since they were produced after swapping too many shorter links. To stop these links from forming in the first place, we would need to consider a more complex swapping policy where nodes are allowed to coordinate their actions (or a simple policy where communication is assumed to be instantaneous).

In Table \ref{tab.variables} we provide a summary of the network and protocol parameters.
In the next Section we present our performance metrics and how to use them to tune the protocol parameter(s) for an optimal performance.
Note that our methods can be applied to any other (synchronous and non-synchronous) CD protocol.

\begin{algorithm}
\caption{- SRS entanglement generation protocol.}\label{alg.srs}
\begin{flushleft}
	\vspace{5pt}
	\textbf{Inputs:} \\
	\begin{itemize}[label=-]
		\item Quantum network with an arbitrary configuration of entangled links and
		\begin{itemize}[label=·]
			\item physical adjacency matrix  $A$;
			\item probability of successful entanglement generation $p_\mathrm{gen}$;
			\item probability of successful swap $p_\mathrm{s}$;
			\item maximum swap distance $M$;
			\item probability of link consumption $p_\mathrm{cons}$.
		\end{itemize}
		\item $q$: probability of performing a swap.
	\end{itemize}
	\vspace{5pt}
	\textbf{Outputs:}
	\begin{itemize}[label=-]
		\item Quantum network with updated configuration of links.
	\end{itemize}
	\vspace{5pt}
	\textbf{Algorithm:}
\end{flushleft}
\begin{algorithmic}[1]
\State \textbf{Cutoffs} are applied and old links are removed.
\State \textbf{Entanglement generation} is attempted at every physical link if enough qubits are available. One entangled link is generated at each physical link with probability $p_\mathrm{gen}$.
\State \textbf{Swaps} are performed. Every node $i$ does the following, in parallel to each other:
	\Statex \hspace{2pt} 3.1: Pick at random a qubit entangled to some qubit in another node $j$.
	\Statex \hspace{2pt} 3.2: Pick at random a qubit entangled to some qubit in node $k\neq j$, and with $A_{jk}=0$. If not possible, go to step 4.
	\Statex \hspace{2pt} 3.3: With probability $q$, perform a swap on both qubits, which succeeds with probability $p_\mathrm{s}$. If it fails, both links involved in the swap are discarded.
\State \textbf{Classical communication}: every node gains updated information about every qubit (where it is connected to) and about every entangled link (link age and number of swaps used to create the link).
\State \textbf{Long links removal}: links that were produced as a consequence of swapping more than $M$ elementary-level links are removed.
\State \textbf{Consumption}: each pair of nodes that share links consume one of them with probability $p_\mathrm{cons}$.
\end{algorithmic}
\end{algorithm}

\renewcommand{\arraystretch}{1.2}
\begin{table}[t]
    \centering
    \caption{Parameters of the quantum network.
    The number of nodes is given by the size of the adjacency matrix $A$.
    When considering a ($d$,$k$)-tree topology, the adjacency matrix $A$ can be replaced by $d$ and $k$.
    The cutoff time $t_\mathrm{cut}$ is given by $p_\mathrm{gen}$, $F_\mathrm{new}$, $F_\mathrm{app}$, and $M$ via (\ref{eq.cutoffcondition}).
    }\label{tab.variables}
    \vspace{-2mm} 
\begin{tabular}{lp{0.8\textwidth}}
\multicolumn{2}{c}{\textbf{Physical topology}}\\
\hline
	$A$ & Physical adjacency matrix \\[5pt]
\multicolumn{2}{c}{\textbf{Hardware}}\\[1pt]
\hline
	$p_\mathrm{gen}$ & Probability of successful heralded entanglement generation \\
	$F_\mathrm{new}$ & Fidelity of newly generated entangled links \\
	$p_\mathrm{swap}$ & Probability of successful entanglement swap \\
	$r$ & Number of qubits per node per physical neighbor \\[5pt]
\multicolumn{2}{c}{\textbf{Software (application related)}}\\[1pt]
\hline
	$F_\mathrm{app}$  & Minimum fidelity to run background applications \\
	$M$ & Maximum number of short-distance links involved in a sequence of swaps \\
	$p_\mathrm{cons}$ & Probability that two nodes sharing some links consume one of them in each time slot \\[5pt]
\multicolumn{2}{c}{\textbf{CD protocol}}\\[1pt]
\hline
	$q$ & Probability of performing swaps according to the SRS protocol          
\end{tabular}

\end{table}

\section{Performance evaluation\\of CD protocols}\label{sec.virtual}
As previously discussed, a CD protocol must ensure that as many pairs of nodes as possible share entangled links, such that they can run quantum applications at any time. Ideally, the protocol should also provide many links between each pair of nodes, as this would allow them to run more demanding applications (e.g., applications that consume entanglement at a high rate) or to have spare links to run sporadic one-time applications.
These notions of a good CD protocol motivate the definition of the following performance metrics.

\begin{definition}
	In a quantum network, the \textit{virtual neighborhood} of node $i$, $V_i(t)$, is the set of nodes that share an entangled link with node $i$ at time $t$. Two nodes are \textit{virtual neighbors} if they share at least one entangled link. The \emph{virtual neighborhood size} is denoted as $v_i(t) \defeq |V_i(t)|$.
\end{definition}

\begin{definition}
	In a quantum network, the \textit{virtual node degree} of node $i$, $k_i(t)$, is the number of entangled links connected to node $i$ at time $t$.
\end{definition}

The virtual neighborhood size and the virtual node degree combined are useful metrics to evaluate the performance of a CD protocol.
The size of the virtual neighborhood of node $i$ corresponds to the number of nodes that can run background applications together with node $i$.
Since our model includes consumption of entanglement in such applications, the virtual degree provides information about how many resources are left to run sporadic applications.

The definitions above are similar to the notions of node neighborhood and node degree in classical graph theory. However, the configuration of entangled links changes over time, and therefore performance metrics from graph theory are ill-suited for this problem, as they generally do not include this type of time-dependence.
In contrast to those metrics, $v_i(t)$ and $k_i(t)$ are not random variables but stochastic processes, i.e., the value at each time slot is a random variable. 

When consuming entanglement at a constant rate, the steady state of the system is of particular interest since it will provide information about the performance of the protocols in the long term.
In Appendix \ref{app.analytical-steady-state}, we show that, when running the SRS protocol (Algorithm \ref{alg.srs}), the network undergoes a transient state and then reaches a unique steady-state regime (the proof also applies to similar CD protocols that use heralded entanglement generation, entanglement swaps, and cutoffs).
In what follows, we will focus on evaluating the performance of the protocol during the steady state via the steady-state expected value of the virtual neighborhood size, $v_i \equiv \lim_{t\rightarrow\infty} \mathbb{E}\big[ v_i (t) \big]$, and the virtual node degree, $k_i \equiv \lim_{t\rightarrow\infty} \mathbb{E}\big[ k_i (t) \big]$.

Next, in Subsection \ref{sec.results_noswaps}, we analyze the behavior of $v_i$ and $k_i$ in the absence of swaps.
In \ref{sec.results_optimization}, we analyze the relationship between these metrics and the protocol parameter $q$ in a tree-like network (although our methods are general and apply to any arbitrary topology) and we find the optimal $q$ that maximizes the virtual neighborhood size of the nodes in the lowest level of the tree.
In \ref{sec.results_hetero}, we provide a mathematical framework, based on Pareto optimization, to provide a good quality of service in heterogeneous networks.

\subsection{No swaps}\label{sec.results_noswaps}
To gain some intuition about the dynamics of the network and to set a benchmark, we consider the SRS protocol with $q=0$, i.e., no swaps. In the absence of swaps, only physical neighbors can share entanglement, and the virtual neighborhood size and the virtual node degree of node $i$ in the steady state are given by
\begin{equation}\label{eq.vi_noswaps_main}
	v_i \equiv \lim_{t\rightarrow\infty} \mathbb{E}\big[v_i(t) \big ] = d_i \frac{1-\frac{1-p_\mathrm{cons}}{1-p_\mathrm{gen}}\lambda^r}{1-\frac{p_\mathrm{cons}}{p_\mathrm{gen}}\lambda^r},
\end{equation}
\begin{equation}\label{eq.ki_noswaps_main}
	k_i \equiv \lim_{t\rightarrow\infty} \mathbb{E}\big[k_i(t) \big ] = d_i \, p_\mathrm{gen} \frac{r + \frac{p_\mathrm{cons}(1-p_\mathrm{cons})}{p_\mathrm{gen}-p_\mathrm{cons}}(\lambda^r-1)}{p_\mathrm{gen} - p_\mathrm{cons}\lambda^r},
\end{equation}
where $\lambda \equiv \frac{p_\mathrm{cons}(1-p_\mathrm{gen})}{p_\mathrm{gen}(1-p_\mathrm{cons})}$; $p_\mathrm{gen}$ is the probability of successful entanglement generation at the physical link level; $p_\mathrm{cons}$ is the link consumption rate; $d_i$ is the physical node degree of node $i$; and $r$ is the number of qubits available at node $i$ per physical neighbor.
(\ref{eq.vi_noswaps_main}) and (\ref{eq.ki_noswaps_main}) are derived in Appendix \ref{app.analytical-noswaps} using general random walks.
Note that in the derivation we assume large enough cutoffs, such that links are consumed with a high enough probability before reaching the cutoff time.

In the absence of swaps, both $v_i$ and $k_i$ are proportional to the physical node degree $d_i$ but independent of the rest of the physical topology. This allows us to study these performance metrics without assuming any specific physical topology.
Figure \ref{fig.analytical-no-swaps} shows the analytical solution for $v_i$ and $k_i$ when each node has five qubits per physical neighbor ($r=5$).
The figure shows a transition from large to small virtual neighborhood size when increasing $p_\mathrm{cons}$ beyond $p_\mathrm{gen}$. When the consumption rate is smaller than the generation rate, the size of the virtual neighborhood saturates and converges to the number of physical neighbors.
When $p_\mathrm{cons}$ increases beyond $p_\mathrm{gen}$, the virtual neighborhood size goes to zero. A similar behavior is observed for the virtual degree, which takes larger values for $p_\mathrm{cons}<p_\mathrm{gen}$.
The same behavior is observed for different values of $r$, as shown in Appendix~\ref{app.analytical-noswaps}.

We conclude that, when the consumption rate is below the generation rate and the cutoffs are large enough, each node can produce sufficient entangled links with its neighboring nodes for background applications and an extra supply of links for sporadic applications.

In Appendix \ref{app.analytical-noswaps}, we use simulations to show that $\mathbb{E}\big[ v_i (t)\big]$ and $\mathbb{E}\big[ k_i (t) \big]$ indeed converge to the steady-state values predicted by our analytical calculations as $t$ goes to infinity.

\begin{figure}[t]
\captionsetup[subfigure]{justification=centering}
\centering
  \centering
  \includegraphics[width=0.88\linewidth]{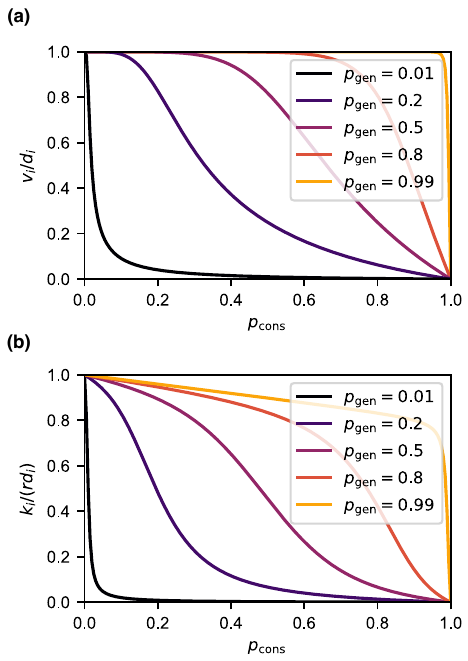}
\caption{\textbf{Larger consumption rates decrease the virtual neighborhood size and the virtual node degree.}
	Expected virtual neighborhood size \textbf{(a)} and virtual node degree \textbf{(b)} in the steady state in a quantum network with no swaps, with cutoff $t_\mathrm{cut}=10/p_\mathrm{cons}$ time steps, and with five qubits per node per physical channel ($r=5$). Both quantities are normalized by the physical degree of node $i$, $d_i$. The curves were calculated using (\ref{eq.vi_noswaps_main}) and (\ref{eq.ki_noswaps_main}).
	}
\label{fig.analytical-no-swaps}
\end{figure}

\subsection{Homogeneous set of users}\label{sec.results_optimization}
Let us now consider a more general setting: the SRS protocol with $q>0$. Nodes are now allowed to perform swaps with some probability $q$. In this setup, a natural question arises: what value of $q$ should we choose to achieve the best performance?

First, let us recall how we measure the performance. We use the expected virtual neighborhood size in the steady state, $v_i$, to determine the number of nodes that can run applications with node $i$ in the background.
We want to maximize $v_i$.
The expected node degree $k_i$ determines the number of additional entangled links that can be used for sporadic applications. Whenever possible, we will try to have a large $k_i$ too, although maximizing $k_i$ is not the purpose of a CD protocol (in fact, $k_i$ is maximized when no swaps are performed, since they always reduce the total number of entangled links, even when they are successful).
In what follows, we show how to optimize the SRS protocol in a quantum network with a $(2,3)$-tree topology (although our methods apply to any quantum network and any CD protocol).
This tree network is particularly interesting because it corresponds to a dumbbell network, which could be used to model users (level-2 nodes) in two metropolitan areas (level-1 nodes) connected by a central link via the level-zero node.
If we assume distances of the order of 10 km, the communication time over optical fibers is of the order of 1 ms.
Hence, the time step must be at least of the order of 1 ms.
For demonstration purposes, we assume a coherence time of $T = 2000$ time steps, which is of the order of 1 s.
As a reference, state-of-the-art coherence times lie between milliseconds (e.g., $T\approx 11.6$ ms in the NV centers experiment from ref. \cite{Pompili2021}) and seconds (e.g., $T\approx 50$ s in the trapped-ion experiment from ref. \cite{Harty2014}).
Additionally, also for demonstration purposes, we assume probabilistic entanglement generation, deterministic swaps, maximum swap distance $M=4$ (such that every node can share links with every other node), and background applications that can be executed with low fidelity links ($F_\mathrm{app}=0.6$).
We analyzed the system by simulating the evolution of the network over time and using Monte Carlo sampling.
For further details about how we find the steady state and how we compute expectation values from simulation data
see Appendix \ref{app.numerical-steady-state}.

Figure \ref{fig.performance_vs_q} shows $v_i$ and $k_i$ for three different nodes. Due to the symmetry of the topology, every node in the same level of the tree has the same statistical behavior. Therefore, we can describe the behavior of the whole tree network by looking at one node per level.
When no swaps are performed ($q=0$), the virtual neighborhood size $v_i$ (Figure~\ref{fig.performance_vs_q}a) is upper bounded by the number of physical neighbors $d_i$ ($d_i=2, 3,$ and $1$, for nodes in level 0, 1, and 2, respectively).
Increasing $q$ leads to an increase in $v_i$, which reaches a maximum value before decreasing again. If too many swaps are performed ($q$ close to 1), then $v_i$ decreases, since each swapping operation consumes two links and produces only one.
The maximum virtual neighborhood size, $\max_q v_i$, is achieved at a different value of $q$ for each node.
The virtual node degree $k_i$ (Figure~\ref{fig.performance_vs_q}b) behaves qualitatively in a similar way for every node: it is maximized at $q=0$ and, as we perform more swaps (increasing $q$), more links are swapped and fewer links remain in the system.
A similar behavior was observed for larger trees and for probabilistic swaps (see Appendix~\ref{app.extra}).

\begin{figure}[t]
\captionsetup[subfigure]{justification=centering}
\centering
  \centering
  \includegraphics[width=0.9\linewidth]{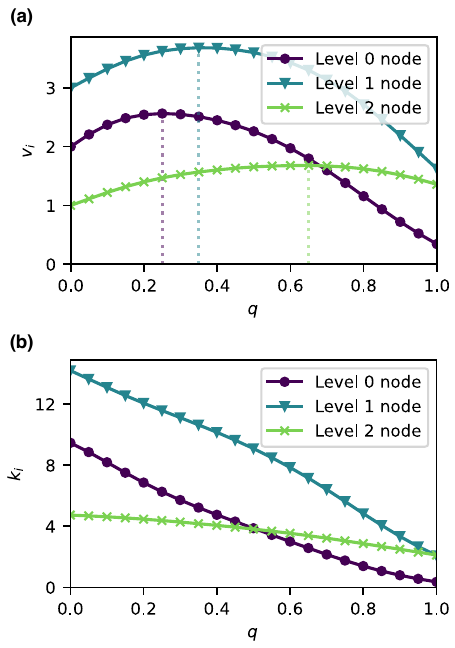}
\caption{\textbf{The virtual neighborhood size of every node cannot be maximized simultaneously.}
	Expected virtual neighborhood size \textbf{(a)} and virtual node degree \textbf{(b)} in the steady state in a (2,3)-tree network running the SRS protocol vs the protocol parameter~$q$.
	The value of $q$ that maximizes the virtual neighborhood size, indicated by the dotted lines, is node-dependent.
	The virtual node degree decreases monotonically with increasing $q$, since more links are consumed in swaps when $q$ is large.
	Other parameter values used in this experiment: $p_\mathrm{gen}=0.9$, $F_\mathrm{new} = 0.888$, $p_\mathrm{swap} = 1$, $r=5$, $T = 2000$ time steps, $M=4$, $p_\mathrm{cons}=p_\mathrm{gen}/4=0.225$, $F_\mathrm{app}=0.6$, $t_\mathrm{cut} = 56$ time steps (given by (\ref{eq.cutoffcondition})).
	Results obtained using a network simulation and Monte Carlo sampling with $10^6$ samples.
	Error bars are not shown since they are smaller than the line width -- the standard errors are below 0.003 and 0.006 for the $v_i$ and $k_i$, respectively.
	The standard error is defined as $2\hat\sigma/\sqrt{N_\mathrm{samples}}$, where $\hat\sigma$ is the sample standard deviation and $N_\mathrm{samples}$ is the number of samples.
	}
\label{fig.performance_vs_q}
\end{figure}

In some cases, we may be only interested in providing a good service to a subset of nodes $U$, the \emph{user nodes}.
The users run applications but also perform swaps to support the entanglement distribution among other pairs of users.
The only purpose of the rest of the nodes (\emph{repeater nodes}) is to aid the users to meet their needs.
In the literature, users that consume entanglement, but do not perform swaps to help other nodes, are generally called \emph{end nodes}. Here we assume every node is a user or a repeater node.
When some nodes are users and some are repeaters, the performance metrics of repeater nodes become irrelevant and we want to maximize $v_i$, $\forall i\in U$.
When the set of users is homogeneous (i.e., all user nodes have the same properties), the statistical behavior of all users is the same and we can formulate a single-objective optimization problem where we want to maximize $v_i$ for a single $i\in U$.
For example, in a tree quantum network, users are generally the nodes at the lowest level \cite{Choi2023}. In the example from Figure \ref{fig.performance_vs_q}, the lowest-level nodes are the level-2 nodes (green line with crosses).
If the level-2 nodes are the only users, the performance of the protocol is optimized for $q \approx 0.65$, which maximizes their $v_i$.
The protocol optimization problem also becomes a single-objective optimization problem in other networks with a strong symmetry, such as regular networks \cite{Talsma2023}.

In Figure \ref{fig.Fapp_pcons} we consider a (2,3)-tree network where the users are the nodes at the lowest level, and we study the influence of the background application requirements ($F_\mathrm{app}$ and $p_\mathrm{cons}$) on the maximum expected virtual neighborhood size of the users.
Here, we assume that the entanglement generation rate is much larger than the consumption rate ($p_\mathrm{gen} \geq 3p_\mathrm{cons}$). Otherwise, links are consumed shortly after they are generated and the behavior of the system is not interesting, as discussed in \ref{sec.results_noswaps}.
From the Figure, we observe that the consumption rate has a stronger effect on the virtual neighborhood.
For example, for $F_\mathrm{app}=0.8$, decreasing $p_\mathrm{cons}$ from 0.3 to 0.1 increases the maximum expected virtual neighborhood size by 20.3\%.
However, when decreasing $F_\mathrm{app}$ from 0.8 to 0.5, the maximum increase in $v_i$ is 3.5\% (for $p_\mathrm{cons}=0$).
The consumption rate has a bigger effect on the virtual neighborhood because it directly impacts the configuration of virtual links, while $F_\mathrm{app}$ only affects links via the cutoff.
In this case, the smallest cutoff is 17 time steps for $F_\mathrm{app}=0.8$ and the largest is 411 time steps for $F_\mathrm{app}=0.5$.
When the generation rate is large, virtual neighbors are likely to share multiple entangled links. In that case, cutoffs barely impact the virtual neighborhood size since links can be regenerated quickly and they are only removed after some time $t_\mathrm{cut}$. However, link consumption can still have a strong impact on the virtual neighborhood size since any link can be consumed at any time step.
If the cutoffs are very close to unity (e.g., when applications require a fidelity $F_\mathrm{app}>0.8$), the cutoff value may strongly affect the virtual neighborhood size.
\\

\begin{figure}[t]
\captionsetup[subfigure]{justification=centering}
\centering
  \centering
  \includegraphics[width=0.9\linewidth]{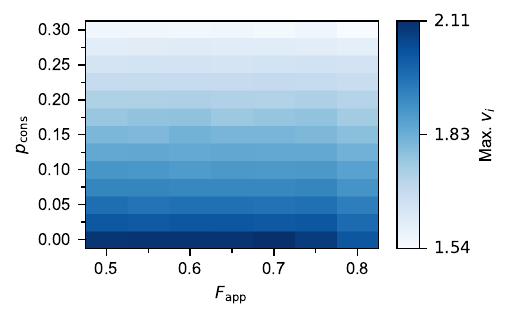}
\caption{\textbf{The consumption rate has a stronger impact on the performance than the application fidelity when the entanglement generation rate is high.}
	Maximum virtual neighborhood size (maximized over $q$) of a layer-2 node in a (2,3)-tree network vs the application fidelity, $F_\mathrm{app}$, and the consumption rate, $p_\mathrm{cons}$.
	Other parameter values used in this experiment: $p_\mathrm{gen}=0.9$, $F_\mathrm{new} = 0.95$, $p_\mathrm{swap} = 1$, $r=5$, $T = 2000$ time steps, $M=4$. The cutoff time $t_\mathrm{cut}$ is given by (\ref{eq.cutoffcondition}).
	Results obtained using a network simulation and Monte Carlo sampling with $10^4$ samples. The maximum error is 0.015 (the error is defined as $2\hat\sigma/\sqrt{N_\mathrm{samples}}$, where $\hat\sigma$ is the sample standard deviation and $N_\mathrm{samples}$ is the number of samples). Note that $\max_q(v_i)$ should be monotonic in $F_\mathrm{app}$ and $p_\mathrm{cons}$ but in this plot we observer small deviations due to the sample size.
	}
\label{fig.Fapp_pcons}
\end{figure}

\subsection{Heterogeneous set of users and\\multi-objective optimization}\label{sec.results_hetero}
In a more general topology, the user nodes may have different properties and different physical degrees.
In that case, the size of the virtual neighborhood of each user may be maximized for a different value of $q$.
Hence, optimizing the protocol for node $i$ generally means that the protocol will be suboptimal for some other node $j\neq i$.
This leads to a multi-objective optimization problem where we must find a tradeoff between the variables that we want to maximize.
In such a problem, optimality can be defined in different ways \cite{Marler2004}. A practical definition is the Pareto frontier:
\begin{definition}
	Let $U$ be the set of user nodes. Let $\vec\theta \in \Theta$ be a combination of parameter values describing the topology, the hardware, and the software of the quantum network, where $\Theta$ is the parameter space. Let $v_i(\vec\theta)$, with $i\in U$, be the set of variables that we want to maximize. The \emph{Pareto frontier} is defined as
	\begin{equation}
		P = \Big\{ \vec\theta \;\big|\; \forall \vec\theta'\in\Theta \;\, \exists i \;\mathrm{s.t.}\; v_i(\vec\theta) \geq v_i(\vec\theta') \Big\}.
	\end{equation}
\end{definition}

\begin{lemma}\label{lemma.pareto_nonempty}
	If the parameter space is non-empty, i.e., $\Theta\neq\emptyset$, then the Pareto frontier is non-empty, i.e., $P\neq\emptyset$.
\end{lemma}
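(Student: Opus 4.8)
The plan is to reduce this multi-objective statement to a single-objective existence argument by scalarization. The key is to unpack what membership in $P$ actually requires. Negating the defining condition, $\vec\theta\notin P$ means there is some $\vec\theta'\in\Theta$ with $v_i(\vec\theta)<v_i(\vec\theta')$ for \emph{every} $i\in U$; that is, $\vec\theta$ is strictly dominated in all objectives simultaneously. So $P$ is exactly the set of parameter choices that no competitor can beat in all coordinates at once, and to prove $P\neq\emptyset$ it suffices to exhibit a single such non-dominated point.

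First I would fix one user $j\in U$ and maximize the scalar objective $v_j:\Theta\to\mathbb{R}$. If this maximum is attained at some $\vec\theta^\star\in\Theta$, then for any competitor $\vec\theta'\in\Theta$ the choice $i=j$ already witnesses $v_j(\vec\theta^\star)\geq v_j(\vec\theta')$, so the existential clause in the definition of $P$ holds and $\vec\theta^\star\in P$. Equivalently, one may maximize the aggregate $\sum_{i\in U} v_i$ and note that a maximizer cannot be strictly dominated, since strict improvement in every coordinate would strictly increase the sum; either scalarization produces a point of $P$. Since the set of users is non-empty and $\Theta\neq\emptyset$ by hypothesis, the construction is well posed.

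The hard part is therefore not the combinatorics of Pareto optimality but the \emph{attainment} of the scalar maximum: the supremum of $v_j$ over $\Theta$ is always finite, because each $v_i$ is an expected neighborhood size and hence bounded, $0\leq v_i\leq n-1$, but a bare non-empty $\Theta$ need not contain a maximizer. I would close this gap with the regularity available in the settings of interest: the parameter ranges actually optimized over are compact (for instance $q\in[0,1]$, with probabilities and fidelities ranging over closed intervals) and the steady-state maps $\vec\theta\mapsto v_i(\vec\theta)$ are continuous, so the extreme value theorem guarantees a maximizer; when $\Theta$ is finite, as after any discretization of the search space, attainment is immediate. Under either hypothesis the argmax lies in $P$, giving $P\neq\emptyset$. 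I would also flag that this construction exploits the weak (all-coordinate) notion of domination encoded in the definition: the same single-maximizer argument would not, without additional work, yield a point that is optimal in the stronger Pareto sense where no coordinate is allowed to worsen.
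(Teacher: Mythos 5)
Your proof is correct and its core is exactly the paper's argument: fix one user $j$, take the argmax of $v_j$ over $\Theta$, and note that $i=j$ witnesses the existential clause in the definition of $P$, so the maximizer lies in $P$ (the paper's proof is precisely this, with no scalarization by sums). The only genuine difference is your attainment caveat, and it is well taken: the paper simply asserts that $\mathrm{argmax}_{\vec\theta\in\Theta}\, v_j(\vec\theta)$ exists, which fails for arbitrary non-empty $\Theta$ (e.g., $v_j$ strictly increasing on $\Theta=(0,1)$ gives $P=\emptyset$), so the regularity you invoke --- finiteness of $\Theta$, or compactness of $\Theta$ together with continuity of the $v_i$ --- is genuinely needed for the lemma to hold as stated, and supplying it makes your write-up more rigorous than the paper's own proof.
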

\begin{proof}
	If $\Theta\neq\emptyset$, there exists some $\vec\theta_j = \mathrm{argmax}_{\vec\theta\in\Theta} \big(v_j(\vec\theta)\big)$, for any $j\in U$.
	Then, $v_j(\vec\theta_j) \geq v_j(\vec\theta')$, $\forall \vec\theta'	\in\Theta$, which means that $\vec\theta_j\in P$. Since $\vec\theta_j$ always exists, we conclude that $P\neq\emptyset$.
\end{proof}

Note that the parameter space $\Theta$ can be a constrained space, i.e., it does not necessarily include all combinations of parameters values. For example, combinations of parameters that are experimentally unfeasible may be excluded from~$\Theta$.
The Pareto frontier achieves a tradeoff in maximizing every $v_i$, $i\in U$. For all the points $\vec\theta$ in the Pareto frontier, we cannot obtain an increase in $v_i(\vec\theta)$ without decreasing or keeping constant some other $v_j(\vec\theta)$.
Moreover, Lemma~\ref{lemma.pareto_nonempty} ensures that there is at least one point $\vec\theta$ in the Pareto frontier.

Note that the Pareto frontier may allow situations in which the distribution of entangled links is not equitable (e.g., one user may maximize its virtual neighborhood size at the expense of another user minimizing it).
To avoid such situations, we can explicitly take into account quality-of-service requirements from every user node.
An example of simple requirement from node $i$ is to have some minimum number of virtual neighbors $c_i$. Then, the set of points that meet the quality-of-service requirements can be written as
\begin{equation}\label{eq.defQ}
	Q = \Big\{ \vec\theta \;\big|\; v_i(\vec\theta) \geq c_i \Big\}.
\end{equation}
An example of more specific requirement is to keep the number of entangled links between two specific nodes always above a certain threshold.

\begin{definition}
	The \emph{optimal region} $P^*$ is the set of parameters that are in the Pareto frontier and meet the quality-of-service requirements, i.e.,
	\begin{equation}
		P^* = P \cap Q,
	\end{equation}
	where $P$ is the Pareto frontier and $Q$ is the set of points that meet the quality-of-service requirements.
\end{definition}

As an example, we consider a $(2,3)$-tree network where the nodes in levels 1 and 2 are users.
Due to the symmetry of the topology, we only need to explicitly optimize $v_i$ for one node in each level.
In this case, it is possible to provide a graphical representation of the Pareto frontier and the optimal region.
Figure \ref{fig.pareto} shows the expected virtual neighborhood size in the steady state for a level-1 user and a level-2 user in a quantum network with a (2,3)-tree topology running the SRS protocol with probabilistic entanglement generation, deterministic swaps, and entanglement consumption at a fixed rate. Each data point corresponds to a different value of the protocol parameter $q$.
The data points highlighted with blue crosses form the Pareto frontier $P$.
In this example, we want the users in the first and second level to have an expected virtual neighborhood size larger than 3 and 1.6, respectively. Then,
\begin{equation}\label{eq.defQexample}
	Q = \Big\{ \vec\theta \;\big|\; v_1(\vec\theta) \geq 3, v_2(\vec\theta) \geq 1.6 \Big\}.
\end{equation}
The regions shaded in red correspond to forbidden regions where the quality-of-service requirements are not met. That is, the points in the white region are in $Q$.
The data points in the optimal region $P^*$ are the blue crosses in the white region. This corresponds to $q \in [0.4, 0.65]$.
All these values of $q$ can be considered optimal, as they are part of the Pareto frontier and meet the minimum user requirements.

As a final remark, note that we have used this multi-objective optimization framework to optimize the performance of a single-parameter CD protocol. However, it can also be used to choose from several CD protocols. This method can be applied to heterogeneous quantum networks with arbitrary topologies.

\begin{figure}[t!]
\captionsetup[subfigure]{justification=centering}
\centering
  \includegraphics[width=0.9\linewidth]{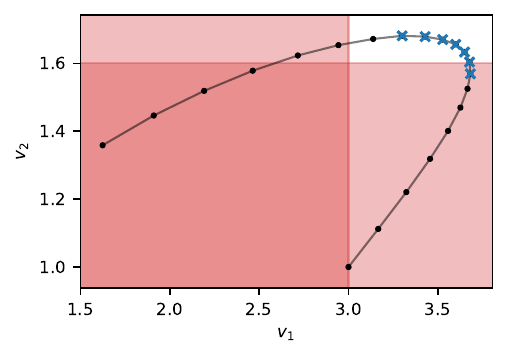}
\caption{\textbf{The optimal region determines the combinations of parameters that provide an optimal performance.}
	Virtual neighborhood size of a level-1 node, $v_1$, and a level-2 node, $v_2$, in a (2,3)-tree network running the SRS protocol for different values of the protocol parameter $q$ (for $q=0$, we have $v_1 = 3$ and $v_2=1$; we increase $q$ in intervals of 0.05 following the black line up to $q=1$).
	The data points with blue crosses form the Pareto frontier $P$. The regions shaded in red are forbidden by the quality-of-service requirements ($c_1=3$, $c_2=1.6$).
	The optimal region $P^*$ is formed by the blue crosses in the white region.
	Other parameter values used in this experiment: $p_\mathrm{gen}=0.9$, $F_\mathrm{new} = 0.888$, $p_\mathrm{swap} = 1$, $r=5$, $T = 2000$ time steps, $M=4$, $p_\mathrm{cons}=p_\mathrm{gen}/4=0.225$, $F_\mathrm{app}=0.6$, $t_\mathrm{cut} = 56$ time steps (given by (\ref{eq.cutoffcondition})).
	Results obtained using a network simulation and Monte Carlo sampling with $10^6$ samples.
	Error bars are not shown since they are smaller than the line width -- the standard errors are below 0.003 and 0.002 for $v_1$ and $v_2$, respectively.
	The standard error is defined as $2\hat\sigma/\sqrt{N_\mathrm{samples}}$, where $\hat\sigma$ is the sample standard deviation and $N_\mathrm{samples}$ is the number of samples.
	}
\label{fig.pareto}
\end{figure}

\section{Discussion}\label{sec.discussion}
In this paper we have introduced metrics to evaluate the performance of protocols for continuous distribution of entanglement.
The virtual neighborhood of a node is the set of nodes that share entanglement with the node, and the virtual degree of a node is the number of entangled states it shares with other nodes.
The goal of the protocol is to maximize the size of the virtual neighborhood of every user.
Here, as an example, we have considered a simple tree network and we have demonstrated how to formulate a single-objective and a multi-objective optimization problem that can be used to optimize the performance when the set of users is homogeneous and heterogeneous, respectively.

In our calculations, we assumed that background applications consume entanglement at a given rate.
We found that, when the entanglement generation rate is large, the consumption rate has a stronger impact on the size of the virtual neighborhood than the fidelity requirements imposed by the quantum applications.

Our formulation also allows the study of protocols that continuously distribute entanglement to maintain a supply of high quality pre-shared entanglement.
Specifically, the SRS protocol described in Algorithm \ref{alg.srs} delivers pre-shared entanglement when the consumption rate is set to zero.
This can be useful to determine the feasibility of quantum network protocols that assume pre-shared entanglement among the nodes of the network.
In this case, an application that uses the available entanglement would disrupt the distribution of entangled states and would bring the system to a new transient state. Hence, an additional useful metric would be the time required to converge to a steady state after such a disruption.
We leave this analysis for future work.

We also leave the generalization of the network model and the protocol as future work.
As an example, one can consider nodes that have a pool of qubits that can be used for any purpose, instead of having $r$ specific qubits that can generate entanglement with each physical neighbor.
One can also define node-dependent protocols, where each node follows a different set of instructions.

Lastly, note that we expect the coexistence of protocols for on-demand and continuous distribution of entanglement in large-scale quantum networks. Continuous distribution can be used to supply entanglement to applications running at a constant rate while on-demand distribution can support this process during peak demands from sporadic applications.

\section{Data availability}\label{sec.data_availability}
The data shown in this paper can be found in ref. \cite{Inesta2023a-data}.

\section{Code availability}\label{sec.code_availability}
Our code can be found in the following GitHub repository: \href{https://github.com/AlvaroGI/optimizing-cd-protocols}{https://github.com/AlvaroGI/optimizing-cd-protocols}.
\vspace{40pt}


\section{Acknowledgments}
We thank B. Davies, T. Coopmans, and G. Vardoyan for discussions and feedback.
We also thank J. van Dam, B. van der Vecht, and L. Talsma for feedback on this manuscript.
\'{A}GI acknowledges financial support from the Netherlands Organisation for Scientific Research (NWO/OCW), as part of the Frontiers of Nanoscience program.
SW acknowledges support from an ERC Starting Grant.

\section{Author contributions}
\'{A}GI defined the project, analyzed the results, and prepared this manuscript.
SW supervised the project and provided active feedback at every stage of the project.

\section{Competing interests}
The authors declare no competing interests.

\section{Additional information}
\textbf{Supplementary information} is available at the end of this document.
\\

\textbf{Correspondence} should be addressed to Álvaro G. Iñesta.

\clearpage
\onecolumngrid
\appendix

\section{Further details on the network model}\label{app.model}
\ssec{Entanglement swap.}
Two nodes that are not physical neighbors cannot generate entanglement directly between them. Instead, they rely on entanglement swap operations to produce a shared entangled state between them \cite{Zukowski1993, Duan2001, Sangouard2011}.
As an example, consider two end nodes A and B, which are not physically connected but share a physical link with an intermediate node C.
To generate an entangled link between A and B, they need to first generate entangled links between A and C, and also between C and B.
Then, node C can perform a Bell state measurement to transform links A-C and C-B into a single entangled link between A and B.
When both input links are Werner states with fidelities $F_1$ and $F_2$, the output state in a swap operation is also a Werner state with fidelity \cite{Munro2015}
\begin{equation}
	F_\text{swap}(F_1,F_2) = F_1\cdot F_2 + \frac{(1-F_1)\cdot(1-F_2)}{3}.
\end{equation}
Note that this operation generally decreases the fidelity: $F_\text{swap}(F_1,F_2) \leq F_1, F_2$.

Additionally, entanglement swaps can be either probabilistic \cite{Calsamiglia2001, Duan2001, Ewert2014} or deterministic \cite{Pompili2021}, depending on the hardware employed.
With probability $p_\text{s}$, the swap operation succeeds and both input states are consumed to produce a single entangled link. With probability $1-p_\text{s}$, the swap operation fails: both input states are consumed but no other entangled state is produced.
\\

\ssec{Purification.}
If the fidelity of an entangled state is not large enough for a specific application, nodes can run a purification protocol to increase its fidelity. In general, these protocols take as input multiple entangled states and output a single state with larger fidelity \cite{Dur2007, Hartmann2007, Victora2020}.

For simplicity, we do not consider any kind of purification in our analysis.
Nevertheless, it is possible to integrate purification of entangled states at the physical link level into our model by decreasing the value of $p_\mathrm{gen}$, to account for all the states that must be prepared in advance to perform the purification protocol. This would also impact the fidelity of newly generated links, $F_\mathrm{new}$, which would correspond now to the fidelity of the links after purification at the physical link level. The cutoff time would also need to be adjusted, since the time step would take a longer time (it would have to include more than one entanglement generation attempt).
If physically distant nodes require larger fidelity links, they can run a purification subroutine as part of the application once they have generated enough entangled links.
\\

\ssec{Cutoff times.}
Quantum states decohere, mainly due to environmental couplings \cite{Dur2005,Chirolli2008}. Decoherence decreases the fidelity of states over time.
We consider a depolarizing noise model, which is a worst-case scenario (other types of noise can be converted to depolarizing noise via twirling \cite{Horodecki1999,Dur2005,Cai2019}).
As shown in Appendix A from \cite{Inesta2023}, if we assume that each qubit of a Werner state is stored in a different memory and experiences depolarizing noise independently, the fidelity of the Werner states evolves as
\begin{equation}\label{eq.Fwerner}
	F(t+\Delta t) = \frac{1}{4} + \bigg(F(t)-\frac{1}{4} \bigg)e^{-\frac{\Delta t}{T}},
\end{equation}
where $F(t)$ is the fidelity of the state at time $t$, $\Delta t$ is an arbitrary interval of time, and $T$ is a parameter that characterizes the exponential decay in fidelity of the whole entangled state.

When the fidelity of the entangled links drops below some threshold, they are no longer useful.
Hence, a common practice is to discard states after a cutoff time $t_\mathrm{cut}$ to prevent wasting resources on states that should not be used anymore \cite{Rozpedek2018}.
We refer to the time passed since the creation of a quantum state as the age of the state.
Whenever the age of an entangled state equals the cutoff time, the state is removed, i.e., the qubits involved are reset.
As shown in \cite{Inesta2023}, to ensure that any two nodes that are at most $M$ physical links away will only share entangled states with fidelity larger than $F_\text{app}$, the cutoff time must satisfy
\begin{equation}\label{eq.cutoffcondition_appendix}
    t_\mathrm{cut} \leq -T \ln\Bigg(\frac{3}{4F_\text{new}-1} \Big( \frac{4F_\text{app}-1}{3} \Big)^{1/M} \Bigg),
\end{equation}
where $F_\text{new}$ is the fidelity of newly generated entangled links.
This condition assumes that the output state in a swap operation takes the age of the oldest input link.
\vspace{20pt}

\clearpage
\section{Existence of a unique steady state}\label{app.analytical-steady-state}
In this Appendix, we show that there is a unique steady-state value for the expected number of virtual neighbors and expected virtual degree of any node when a quantum network is running CD Protocol \ref{alg.srs}, under the assumption that entanglement generation is probabilistic ($p_\mathrm{gen}<1$).

We consider the stochastic processes $v_i(t)$ and $k_i(t)$, which correspond to the number of virtual neighbors of node $i$ and the virtual degree of node $i$, respectively.
The expected values over many realizations of the processes are denoted as $\mathbb{E}\big[ v_i(t) \big]$ and $\mathbb{E}\big[ k_i(t) \big]$.

The state of the network can be represented using the ages of all entangled links present in the network (the age is measured in number of time slots). This can be written as an array $s$ with $\frac{1}{2} r \sum_{i=0}^{n-1} d_i$ components, since there are $n$ nodes and each node $i$ can store up to $rd_i$ entangled links, where $d_i$ is the physical degree of node $i$ and $r$ is a hardware-dependent parameter that limits the maximum number of qubits per node.
Since we impose cutoff times on the memories, each of the components of this vector can only take a finite set of values.
Let $\mathcal{S}$ be the set of all possible states, which is also finite.

Given a state $s(t)$ at time $t\in\mathcal{N}$ (recall that we consider discrete time steps in our protocols), the transition to a new state only depends on the number of available memories at each node for generation of new links and on the number of available links for performing swaps and for consumption in applications.
Hence, the transition does not depend on past information:
$$\text{Pr}\big[ s(t+1)=\sigma \;|\; s(0), s(1), \dots, s(t) \big] 
= \text{Pr}\big[ s(t+1)=\sigma \;|\; s(t) \big].$$

Consequently, the state of the network can be modeled as a Markov chain with the following three properties:
\begin{enumerate}
	\item The chain is irreducible, since every state is reachable from every other state.
	If $p_\mathrm{gen}<1$, there is a nonzero probability that no links are generated over many time slots until all existing links expire due to cutoffs and therefore the network returns to the starting state with no links -- from this initial state, every other state can be reached.
	\item The chain is aperiodic. A sufficient condition for an irreducible chain to be aperiodic is that $\text{Pr}\big[ s(t+1)=\sigma \;|\; s(t)=\sigma \big] > 0$ for some state $\sigma\in\mathcal{S}$ \cite{VanMieghem2014}. When entanglement generation is probabilistic ($p_\mathrm{gen}<1$), the state with no entangled links satisfies the previous condition (if all entanglement generation attempts fail, the network will remain in a state with no links), and therefore the chain is aperiodic.
	\item The chain is positive recurrent (i.e., the mean time to return to any state is finite), since it is irreducible and it has a finite state space $\mathcal{S}$ (see Theorem 9.3.5 from Ref. \cite{VanMieghem2014}).
\end{enumerate}

According to Theorem 9.3.6 from Ref. \cite{VanMieghem2014}, from the three properties above we can conclude that there exists a unique steady-state probability distribution, i.e., the following limit exists: $\lim_{t\rightarrow\infty} \text{Pr}\big[ s(t)=\sigma \big]$, $\forall \sigma \in \mathcal{S}$.

Let us now compute the expected number of virtual neighbors in the steady state:
\begin{equation}\label{eq.v.ss.long}
\begin{split}
	v_i \equiv \lim_{t\rightarrow\infty} \mathbb{E}\big[ v_i(t) \big]
	&= \lim_{t\rightarrow\infty} \sum_{v=0}^{n} v\cdot\mathrm{Pr} \big[ v_i(t)=v \big] \\
	&= \sum_{v=0}^{n} v\cdot \lim_{t\rightarrow\infty} \mathrm{Pr} \big[ v_i(t)=v \big] \\
	&= \sum_{v=0}^{n} v\cdot \lim_{t\rightarrow\infty} \sum_{\sigma\in\mathcal{S}} \mathrm{Pr} \big[ v_i(t)=v \;|\; s(t)=\sigma \big] \cdot \mathrm{Pr} \big[ s(t)=\sigma \big] \\
	&= \sum_{v=0}^{n} v\cdot \sum_{\sigma\in\mathcal{S}} \lim_{t\rightarrow\infty} \mathrm{Pr} \big[ v_i(t)=v \;|\; s(t)=\sigma \big] \cdot \lim_{t\rightarrow\infty} \mathrm{Pr} \big[ s(t)=\sigma \big] \\
	&= \sum_{\sigma\in\mathcal{S}} \lim_{t\rightarrow\infty} \mathrm{Pr} \big[ s(t)=\sigma \big] \cdot \sum_{v=0}^{n} v \cdot \lim_{t\rightarrow\infty} \mathrm{Pr} \big[ v_i(t)=v \;|\; s(t)=\sigma \big].
\end{split}
\end{equation}
Let us define a function $\kappa(s;i,j)$ that takes as input a state $s$ and two node indices $i$ and $j$. This function returns the number of entangled links shared by nodes $i$ and $j$ in state $s$.
The virtual neighborhood size of node $i$ at time $t$, $v_i(t)$, is given by the state of the network at time $t$, $s(t)$, and it can be written as $$v_i(t) = v_i(s(t)) = \sum_{j \in V \setminus \{i\}} \min\Big(1,\kappa\big(s(t);i,j\big)\Big).$$
Consequently,
\begin{equation}\label{eq.pr.v}
	\mathrm{Pr} \big[ v_i(t)=v \;|\; s(t)=\sigma \big] = 
	\begin{cases}
	1,\;\;\mathrm{if}\;\;v = \sum_{j \in V \setminus \{i\}} \min\big(1,\kappa(\sigma;i,j)\big)\\
	0,\;\;\mathrm{otherwise}
	\end{cases}.
\end{equation}
Using (\ref{eq.pr.v}), we can write (\ref{eq.v.ss.long}) as
\begin{equation}\label{eq.v.ss}
	v_i
	= \sum_{\sigma\in\mathcal{S}} \lim_{t\rightarrow\infty} \mathrm{Pr} \big[ s(t)=\sigma \big] \sum_{j \in V \setminus \{i\}} \min\big(1,\kappa(\sigma;i,j)\big),
\end{equation}

The expected virtual degree can be calculated similarly but using its corresponding definition, $k_i(s(t)) = \sum_{j \in V \setminus \{i\}} \kappa\big(s(t);i,j\big)$:
\begin{equation}\label{eq.k.ss}
	k_i \equiv \lim_{t\rightarrow\infty} \mathbb{E}\big[ k_i (t) \big]
	= \sum_{\sigma\in\mathcal{S}} \lim_{t\rightarrow\infty}  \mathrm{Pr} \big[ s(t)=\sigma \big] \sum_{j \in V \setminus \{i\}} \kappa\big(\sigma;i,j\big).
\end{equation}

Since we have shown that the probability distributions that appear in (\ref{eq.v.ss}) and (\ref{eq.k.ss}) exist and are unique, then the quantities $v_i$ and $k_i$ also exist and are unique. That is, there is a unique steady-state value for the expected number of virtual neighbors and the expected virtual degree of any node $i$.

From our simulations, we also expect a unique steady state for $p_\mathrm{gen}=1$.
The main difficulty in proving its existence is that the Markov chain is not always irreducible (the state with no links may not be reachable from some other states since links are generated at maximum rate). However, if one can show that there is a unique equivalence class (i.e., a unique set of states that are reachable from each other) that is reached after a finite number of transitions, the derivation above may be applicable to this equivalence class, which would constitute an irreducible Markov chain.
\\

Lastly, note that in practice one may find an initial transient state with periodic behavior.
This happens in quasi-deterministic systems, i.e., systems in which all probabilistic events (e.g., successful entanglement generation) happen with probability very close to 1.
In quasi-deterministic systems, all realizations of the stochastic processes are identical at the beginning with a very large probability. For some combinations of parameters, these processes may display a periodic behavior with a period on the order of the cutoff time.
Over time, each realization starts to behave differently due to some random events yielding different outcomes.
Consequently, the periodic oscillations will dephase, and they will cancel out after averaging over all realizations.
In the example from Figure \ref{fig.periodic_steady_state}, we find that both $\mathbb{E}\big[ v_i(t) \big]$ and $\mathbb{E}\big[ k_i(t) \big]$ are periodic with period approximately $t_\mathrm{cut}$. The amplitude of the oscillations vanishes after a few periods.
\\

\begin{figure}[h!]
\captionsetup[subfigure]{justification=centering}
     \centering
     \begin{subfigure}[b]{0.4\textwidth}
         \centering
         \includegraphics[width=\textwidth]{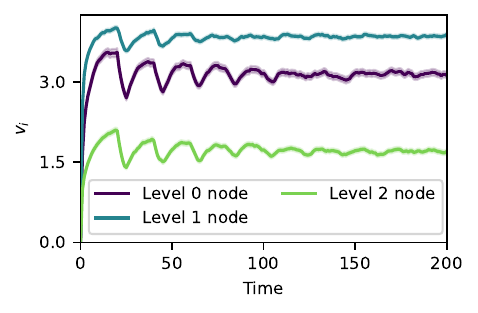}
	\vspace{-1.5\baselineskip}
         \caption{Expected virtual neighborhood size.}
         \vspace{5pt}
     \end{subfigure}
     \begin{subfigure}[b]{0.4\textwidth}
         \centering
         \includegraphics[width=\textwidth]{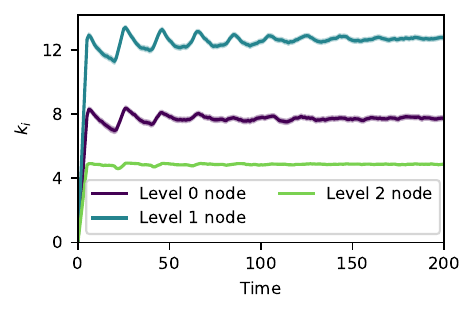}
	\vspace{-1.5\baselineskip}
         \caption{Expected virtual node degree.}
         \vspace{5pt}
     \end{subfigure}

	\caption{\textbf{A transient state with periodic oscillations may exist in quasi-deterministic systems.}
	Evolution of $v_i$ and $k_i$ in a quantum network with a $(2,3)$-tree topology running the SRS protocol described in the main text.
	Each line (purple, blue, and green) corresponds to a node in a different level of the tree (level 0, 1, and 2).
	The error for each solid line is shown as a shaded region, although it is hard to notice since its maximum value is 0.040 in \textbf{(a)} and 0.084 in \textbf{(b)} (the error is defined as $2\hat\sigma/N_\mathrm{samples}$, where $\hat\sigma$ is the sample standard deviation and $N_\mathrm{samples}$ is the number of samples).
	Other parameters used in this experiment: $p_\mathrm{gen}=0.99$, $F_\mathrm{new} = 0.88$, $p_\mathrm{swap}=1$, $r=5$, $T = 2000$ time steps, $M=4$, $p_\mathrm{cons}=0.01$, $q=0.2$, $F_\mathrm{app}=0.6$, $t_\mathrm{cut} = 20$ time steps.
	Numerical results obtained using a network simulation and Monte Carlo sampling with $10^3$ samples.
	}
	\label{fig.periodic_steady_state}
\end{figure}
\vspace{20pt}

\clearpage
\section{Analytical performance metrics in the absence of swaps}\label{app.analytical-noswaps}
In this Appendix, we consider a CD protocol with the same structure as the SRS protocol (see Algorithm \ref{alg.srs} from the main text) in the absence of swaps and with a large enough cutoff time ($t_\mathrm{cut}>r$ and $t_\text{cut}\gg\frac{1}{p_\mathrm{cons}}$, where the cutoff is measured in number of time steps).
As discussed in the main text, when no swaps are performed, we can derive closed-form expressions to gain some intuition about the dynamics of the network and to set a benchmark.
Here, we show that the virtual neighborhood size and the virtual node degree of node $i$ in the steady state are given by
\begin{equation}\label{eq.vi_noswaps_app}
	v_i \equiv \lim_{t\rightarrow\infty} \mathbb{E}\big[v_i(t) \big ] = d_i \frac{1-\frac{1-p_\mathrm{cons}}{1-p_\mathrm{gen}}\lambda^r}{1-\frac{p_\mathrm{cons}}{p_\mathrm{gen}}\lambda^r}
\end{equation}
and
\begin{equation}\label{eq.ki_noswaps_app}
	k_i \equiv \lim_{t\rightarrow\infty} \mathbb{E}\big[k_i(t) \big ] = d_i \, p_\mathrm{gen} \frac{r + \frac{p_\mathrm{cons}(1-p_\mathrm{cons})}{p_\mathrm{gen}-p_\mathrm{cons}}(\lambda^r-1)}{p_\mathrm{gen} - p_\mathrm{cons}\lambda^r},
\end{equation}
where $\lambda \equiv \frac{p_\mathrm{cons}(1-p_\mathrm{gen})}{p_\mathrm{gen}(1-p_\mathrm{cons})}$; $p_\mathrm{gen}$ is the probability of successful entanglement generation at the physical link level; $p_\mathrm{cons}$ is the link consumption probability; $d_i$ is the physical node degree of node $i$; and $r$ is the number of qubits per physical link available at each node.

We define $w_{ij}$ as the number of entangled links shared between nodes $i$ and $j$ (similar to the definition of $\kappa$ in Appendix~\ref{app.analytical-steady-state}).
In the absence of swaps, nodes $i$ and $j$ can only share entangled links if they are physical neighbors, since the only mechanism available is heralded entanglement generation. If $i$ and $j$ are not physical neighbors, then $w_{ij}=0$.
The entangled links shared between nodes $i$ and $j$ can be consumed in some application or discarded when applying cutoffs.
However, we also assume that entangled links are always consumed before they reach the cutoff time, i.e., $t_\text{cut}\gg\frac{1}{p_\mathrm{cons}}$ (cutoff measured in number of time steps).
This assumption allows us to model $w_{ij}$ using the general random walk shown in Figure~\ref{fig.random-walk}:
\begin{itemize}
	\item The maximum value for $w_{ij}$ is the number of qubits available per physical link, $r$. This state is reachable even when entanglement generation is done sequentially, since links can be stored for longer than $r$ time steps (we assume $t_\mathrm{cut} > r$).
	\item The probabilities of transition forward are $p_k = p_\mathrm{gen}(1-p_\mathrm{cons})$, $\forall k<r$, and $p_r=0$.
	\item The probabilities of transition backwards are $q_0=0$, $q_k = p_\mathrm{cons}(1-p_\mathrm{gen})$, $\forall 0<k<r$, and $q_r = p_\mathrm{cons}$.
	\item The no-transition probability is $z_k = 1-p_k-q_k$, $\forall k$.
\end{itemize}

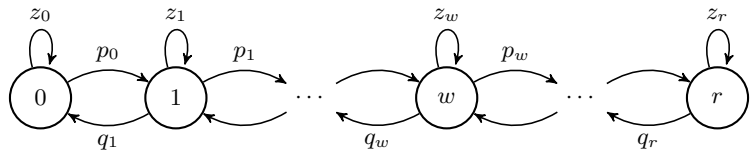
\begin{figure}[h]
	\begin{tikzpicture}[->, >=stealth', auto, semithick, node distance=1.8cm]
	\tikzstyle{every state}=[fill=white,draw=black,thick,text=black,scale=1]
	\node[state]    (0)               {$0$};
	\node[state]    (1)[right of=0]   {$1$};
	\node[]    (dots1)[right of=1]   {$\cdots$};
	\node[state]    (w)[right of=dots1]   {$w$};
	\node[]    (dotsw)[right of=w]   {$\cdots$};
	\node[state]    (g)[right of=dotsw]   {$r$};
	\path (0) edge [bend left] node [above] {$p_0$} (1); 
        \path (1) edge [bend left] node [above] {$p_1$} (dots1);
        \path (dots1) edge [bend left] node [above] {} (w);
        \path (w) edge [bend left] node [above] {$p_w$} (dotsw);
        \path (dotsw) edge [bend left] node [above] {} (g);
        \path (0) edge [loop above] node [above] {$z_0$} (0);
        \path (1) edge [loop above] node [above] {$z_1$} (1);
        \path (w) edge [loop above] node [above] {$z_w$} (w);
        \path (g) edge [loop above] node [above] {$z_r$} (g);
	\path (g) edge [bend left] node [below] {$q_r$} (dotsw);
	\path (dotsw) edge [bend left] node [below] {} (w);
	\path (w) edge [bend left] node [below] {$q_w$} (dots1);
	\path (dots1) edge [bend left] node [below] {} (1);
	\path (1) edge [bend left] node [below] {$q_1$} (0);
	\end{tikzpicture}
\caption{\textbf{General random walk modeling the number of entangled links $w_{ij}$ between nodes $i$ and $j$ in the absence of swaps.}}
\label{fig.random-walk}
\end{figure}

The steady-state probability distribution of this Markov chain is given by \cite{VanMieghem2014}
\begin{equation}
	\lim_{t\rightarrow\infty} \text{Pr}\big[w_{ij}(t)=w\;|\;A_{ij}=1\big] = 
	\begin{cases}
	\Big( 1 + \sum_{k=1}^{r}\prod_{m=0}^{k-1}\frac{p_m}{q_{m+1}} \Big) ^{-1}, \;\;w=0\\
	\Big( 1 + \sum_{k=1}^{r}\prod_{m=0}^{k-1}\frac{p_m}{q_{m+1}} \Big) ^{-1} \prod_{m=0}^{w-1}\frac{p_m}{q_{m+1}}, \;\;w>0
	\end{cases},
\end{equation}
where $A_{ij}$ is a binary variable that indicates if nodes $i$ and $j$ are physical neighbors ($A_{ij}=1$) or not ($A_{ij}=0$).
After some algebra, the previous equation can be rewritten in terms of the original variables of the problem:
\begin{equation}\label{eq.limprobwij}
	\lim_{t\rightarrow\infty} \text{Pr}\big[w_{ij}(t)=w\;|\;A_{ij}=1\big] = 
	\begin{cases}\vspace{5pt}
	\pi_0, \;\;w=0\\\vspace{5pt}
	\pi_0 \rho^w, \;\;0<w<r\\
	\pi_0 \rho^r (1-p_\mathrm{gen}), \;\;w=r
	\end{cases},
\end{equation}
where
\begin{equation}\label{eq.pi0andrho}
	\pi_0 \equiv \frac{p_\mathrm{gen}-p_\mathrm{cons}}{(1-p_\mathrm{gen}) ( p_\mathrm{gen}\rho^r -p_\mathrm{cons} )}
\;\;\;\;\text{and}\;\;\;\;
\rho \equiv \frac{p_\mathrm{gen}(1-p_\mathrm{cons})}{p_\mathrm{cons}(1-p_\mathrm{gen})}.
\end{equation}

The expected value of $w_{ij}$ is
\begin{equation}\label{eq.limexpwij}
\begin{split}
	\lim_{t\rightarrow\infty} \mathbb{E}\big[w_{ij}(t)\;|\;A_{ij}=1\big]
	&= \lim_{t\rightarrow\infty} \sum_{w=0}^r w \cdot \text{Pr}\big[w_{ij}(t)=w\;|\;A_{ij}=1\big]\\
	&= \sum_{w=0}^r w \cdot \lim_{t\rightarrow\infty}\text{Pr}\big[w_{ij}(t)=w\;|\;A_{ij}=1\big]\\
	&\stackrel{a}{=} \pi_0 \sum_{w=0}^{r-1} w \rho^w + r \pi_0 \rho^r (1-p_\mathrm{gen})\\
	&= \pi_0 \frac{\rho - r\rho^r + (r-1)\rho^{r+1}}{(1-\rho)^2} + r \pi_0 \rho^r (1-p_\mathrm{gen})\\
	&= \frac{p_\mathrm{gen}}{(p_\mathrm{gen}-p_\mathrm{cons})(p_\mathrm{gen}\rho^r - p_\mathrm{cons})} \Big( r(p_\mathrm{gen}-p_\mathrm{cons})\rho^r + p_\mathrm{cons}(1-p_\mathrm{cons})(1-\rho^r)\Big),
\end{split}
\end{equation}
where we have used (\ref{eq.limprobwij}) in step $a$.

The virtual neighborhood size of node $i$ is defined in terms of the variables $w_{ij}$ as $v_i(t) = \sum_{j=1}^{n} \min(w_{ij}(t),1)$, and the expectation value can be calculated as follows:
\begin{equation}\label{eq.expected_vi_noswaps}
\begin{split}
	v_i \equiv \lim_{t\rightarrow\infty} \mathbb{E}\big[v_i(t) \big]
	&= \lim_{t\rightarrow\infty} \mathbb{E}\bigg[\sum_{j=1}^{n} \min(w_{ij}(t),1) \bigg]
	= \sum_{j=1}^{n} \lim_{t\rightarrow\infty} \mathbb{E}\big[\min(w_{ij}(t),1) \big]\\
	&\stackrel{a}{=} \sum_{j=1}^{n} \lim_{t\rightarrow\infty} \sum_{x=0}^{r} x \cdot \text{Pr}\big[\min(w_{ij}(t),1)=x\big]
	= \sum_{j=1}^{n} \sum_{x=0}^{r} x \cdot \lim_{t\rightarrow\infty} \text{Pr}\big[\min(w_{ij}(t),1)=x\big]\\
	&\stackrel{b}{=} \sum_{j=1}^{n} \lim_{t\rightarrow\infty} \text{Pr}\big[\min(w_{ij}(t),1)=1\big]
	= \sum_{j=1}^{n} \lim_{t\rightarrow\infty} \text{Pr}\big[w_{ij}(t)>0\big]\\
	&\stackrel{c}{=} \sum_{j=1}^{n} \lim_{t\rightarrow\infty} \text{Pr}(A_{ij}=1) \cdot \text{Pr}\big[w_{ij}(t)>0 \;|\;A_{ij}=1 \big]\\
	&\stackrel{d}{=} \sum_{j=1}^{n} A_{ij} \lim_{t\rightarrow\infty} \Big(1-\text{Pr}\big[w_{ij}(t)=0\;|\;A_{ij}=1\big]\Big)\\
	&\stackrel{e}{=} \sum_{j=1}^{n} A_{ij}(1-\pi_0)\\
	&\stackrel{f}{=} d_i(1-\pi_0)\\
	&\stackrel{g}{=} d_i \frac{p_\mathrm{gen}^{r+1} (1-p_\mathrm{cons})^r - p_\mathrm{gen}(1-p_\mathrm{gen})^{r-1} p_\mathrm{cons}^r (1-p_\mathrm{cons})}{p_\mathrm{gen}^{r+1} (1-p_\mathrm{cons})^r - (1-p_\mathrm{gen})^{r} p_\mathrm{cons}^{r+1}}\\
	&= d_i \frac{1-\frac{1-p_\mathrm{cons}}{1-p_\mathrm{gen}}\lambda^r}{1-\frac{p_\mathrm{cons}}{p_\mathrm{gen}}\lambda^r}, 
\end{split}
\end{equation}
where $\lambda \equiv \frac{p_\mathrm{cons}(1-p_\mathrm{gen})}{p_\mathrm{gen}(1-p_\mathrm{cons})}$, $d_i$ is the physical degree of node $i$, and $n$ is the total number of nodes, and with the following steps:
\begin{enumerate}[label=\alph*.]
	\item We use the definition of expected value and the fact that $w_{ij}(t) \leq r$.
	\item We use the fact that $\min(w_{ij}(t),1) \in \{0, 1\}$.
	\item We use the law of total probability, i.e., $\text{Pr}(X) = \sum_n \text{Pr}(Y_n) \cdot \text{Pr}(X|Y_n)$. Moreover, if two nodes $i$ and $j$ are not physical neighbors ($A_{ij}=0$), they cannot share any entangled links due to the absence of swaps, i.e., $\text{Pr}\big[w_{ij}(t)>0 \;|\;A_{ij}=0 \big] = 0$.
	\item Given the topology, $A_{ij}$ is a binary variable with a fixed value. Therefore, $\text{Pr}(A_{ij}=1) = A_{ij}$.
	\item We use (\ref{eq.limprobwij}).
	\item The physical node degree of node $i$ can be computed as $d_i = \sum_{j=1}^{n} A_{ij}$.
	\item We use (\ref{eq.pi0andrho}).
\end{enumerate}

The virtual degree of node $i$ is defined in terms of the variables $w_{ij}$ as $k_i(t)=\sum_{j=1}^{n} w_{ij}(t)$, and the expectation value can be calculated in a similar way to $v_i$:
\begin{equation}\label{eq.expected_ki_noswaps}
\begin{split}
	k_i \equiv \lim_{t\rightarrow\infty} \mathbb{E}\big[k_i(t) \big]
	&= \lim_{t\rightarrow\infty} \mathbb{E}\bigg[\sum_{j=1}^{n} w_{ij}(t) \bigg]
	= \lim_{t\rightarrow\infty} \sum_{j=1}^{n} \mathbb{E}\big[w_{ij}(t) \big]\\
	&\stackrel{a}{=} \lim_{t\rightarrow\infty} \sum_{j=1}^{n} A_{ij} \cdot \mathbb{E}\big[w_{ij}(t) \;|\; A_{ij}=1\big]\\
	&\stackrel{b}{=} \lim_{t\rightarrow\infty} \mathbb{E}\big[w_{ij}(t) \;|\; A_{ij}=1\big] \cdot \sum_{j=1}^{n} A_{ij}\\
	&\stackrel{c}{=} d_i \cdot \lim_{t\rightarrow\infty} \mathbb{E}\big[w_{ij}(t) \;|\; A_{ij}=1\big]\\
	&\stackrel{d}{=} d_i  p_\mathrm{gen}  \frac{ r (p_\mathrm{gen}-p_\mathrm{cons}) p_\mathrm{gen}^r (1-p_\mathrm{cons})^r + p_\mathrm{cons}(1-p_\mathrm{cons})\big(p_\mathrm{cons}^r(1-p_\mathrm{gen})^r - p_\mathrm{gen}^r(1-p_\mathrm{cons})^r\big) }{ (p_\mathrm{gen}-p_\mathrm{cons})\big(p_\mathrm{gen}^{r+1}(1-p_\mathrm{cons})^r - p_\mathrm{cons}^{r+1} (1-p_\mathrm{gen})^r \big)}\\
	&= d_i  p_\mathrm{gen} \frac{r + \frac{p_\mathrm{cons}(1-p_\mathrm{cons})}{p_\mathrm{gen}-p_\mathrm{cons}}(\lambda^r-1)}{p_\mathrm{gen} - p_\mathrm{cons}\lambda^r},
\end{split}
\end{equation}
where $\lambda \equiv \frac{p_\mathrm{cons}(1-p_\mathrm{gen})}{p_\mathrm{gen}(1-p_\mathrm{cons})}$, and with the following steps:
\begin{enumerate}[label=\alph*.]
	\item We use the law of total probability, i.e., $\text{Pr}(X) = \sum_n \text{Pr}(Y_n) \cdot \text{Pr}(X|Y_n)$. Moreover, if two nodes $i$ and $j$ are not physical neighbors ($A_{ij}=0$), they cannot share any entangled links due to the absence of swaps, i.e., $\text{Pr}\big[w_{ij}(t)>0 \;|\;A_{ij}=0 \big] = 0$. Given the topology, $A_{ij}$ is a binary variable with a fixed value, therefore, $\text{Pr}(A_{ij}=1) = A_{ij}$.
	\item In a homogeneous network with no swaps, $w_{ij}$ depends on $A_{ij}$ but is otherwise independent of the nodes $i$ and $j$. Hence, $\mathbb{E}\big[w_{ij}(t) \;|\; A_{ij}=1\big]$ does not depend on $j$. This can also be seen in (\ref{eq.limexpwij}).
	\item The physical node degree of node $i$ can be computed as $d_i = \sum_{j=1}^{n} A_{ij}$.
	\item We use (\ref{eq.limexpwij}).
\end{enumerate}

(\ref{eq.expected_vi_noswaps}) and (\ref{eq.expected_ki_noswaps}) can be used to study the performance of the protocol in the limit of large number of resources ($r\rightarrow\infty$).
When $p_\mathrm{gen}>p_\mathrm{cons}$ we find
\begin{equation}
	\lim_{r\rightarrow\infty} v_i = d_i,
	\;\;\;\; \mathrm{and} \;\;\;\;
	\lim_{r\rightarrow\infty} k_i = \lim_{r\rightarrow\infty} r d_i = \infty.
\end{equation}
This means that, when the generation rate exceeds the consumption rate, the virtual neighborhood size will eventually saturate and every node will share entanglement with every physical neighbor. In particular, the average number of entangled links will increase infinitely (for large but finite $r$, $k_i$ reaches a maximum value of $\sim r d_i$).
When $p_\mathrm{gen}<p_\mathrm{cons}$,
\begin{equation}
	\lim_{r\rightarrow\infty} v_i = d_i \frac{p_\mathrm{gen}(1-p_\mathrm{cons})}{p_\mathrm{cons}(1-p_\mathrm{gen})},
	\;\;\;\; \mathrm{and} \;\;\;\;
	\lim_{r\rightarrow\infty} k_i = d_i p_\mathrm{gen} \frac{1-p_\mathrm{cons}}{p_\mathrm{cons}-p_\mathrm{gen}}.
\end{equation}

In Figure \ref{fig.analytical_no_swaps_extended}, we plot the expected virtual neighborhood size and expected virtual degree for different combinations of parameters, focusing on the interplay between $p_\mathrm{gen}$ and $p_\mathrm{cons}$. Both quantities decrease with increasing consumption rate, as one would expect, and quickly drop to zero for $p_\mathrm{cons}>p_\mathrm{gen}$.

\begin{figure}[ht!]
\captionsetup[subfigure]{justification=centering}
     \centering
     \begin{subfigure}[b]{0.4\textwidth}
         \centering
         \includegraphics[width=\textwidth]{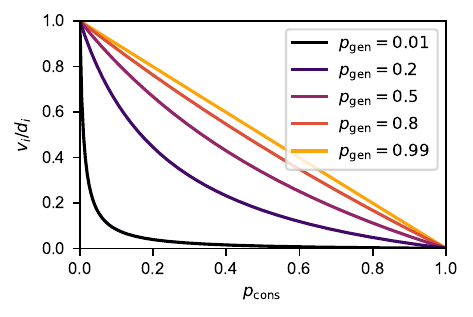}
	\vspace{-1.5\baselineskip}
         \caption{Virtual neighborhood size ($r=1$).}
         \vspace{5pt}
     \end{subfigure}
     \begin{subfigure}[b]{0.4\textwidth}
         \centering
         \includegraphics[width=\textwidth]{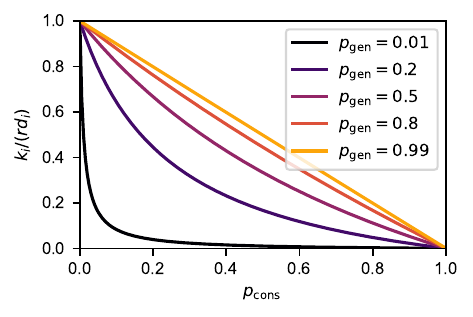}
	\vspace{-1.5\baselineskip}
         \caption{Virtual node degree ($r=1$).}
         \vspace{5pt}
     \end{subfigure}

     \begin{subfigure}[b]{0.4\textwidth}
              \centering
         \includegraphics[width=\textwidth]{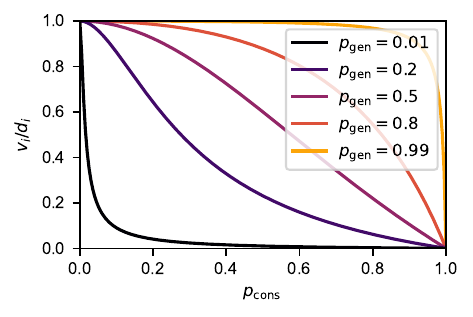}
	\vspace{-1.5\baselineskip}
         \caption{Virtual neighborhood size ($r=2$).}
         \vspace{5pt}
     \end{subfigure}
     \begin{subfigure}[b]{0.4\textwidth}
         \centering
         \includegraphics[width=\textwidth]{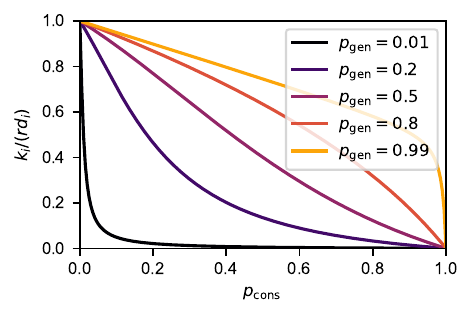}
	\vspace{-1.5\baselineskip}
         \caption{Virtual node degree ($r=2$).}
         \vspace{5pt}
     \end{subfigure}
     
     \begin{subfigure}[b]{0.4\textwidth}
              \centering
         \includegraphics[width=\textwidth]{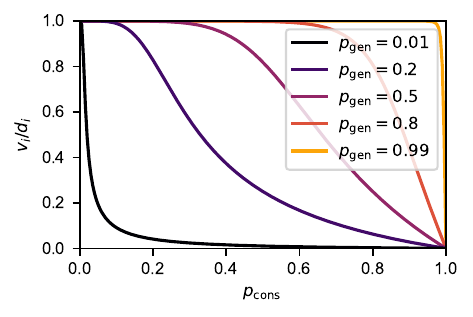}
	\vspace{-1.5\baselineskip}
         \caption{Virtual neighborhood size ($r=5$).}
         \vspace{5pt}
     \end{subfigure}
     \begin{subfigure}[b]{0.4\textwidth}
         \centering
         \includegraphics[width=\textwidth]{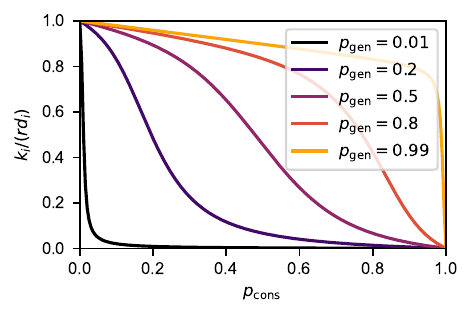}
	\vspace{-1.5\baselineskip}
         \caption{Virtual node degree ($r=5$).}
         \vspace{5pt}
     \end{subfigure}
     
     \begin{subfigure}[b]{0.4\textwidth}
              \centering
         \includegraphics[width=\textwidth]{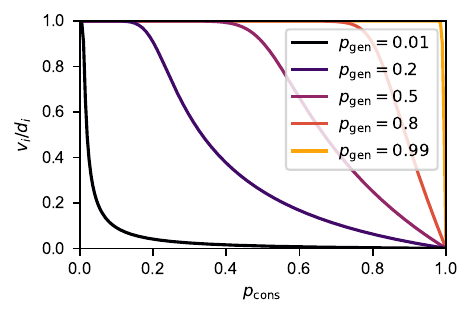}
	\vspace{-1.5\baselineskip}
         \caption{Virtual neighborhood size ($r=10$).}
         \vspace{5pt}
     \end{subfigure}
     \begin{subfigure}[b]{0.4\textwidth}
         \centering
         \includegraphics[width=\textwidth]{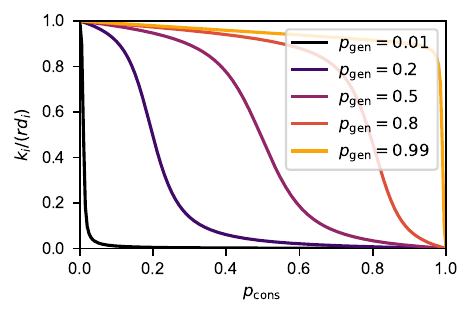}
	\vspace{-1.5\baselineskip}
         \caption{Virtual node degree ($r=10$).}
         \vspace{5pt}
     \end{subfigure}
     
	\caption{\textbf{Larger consumption rates decrease the virtual neighborhood size and the virtual node degree.}
	Expected virtual neighborhood size (left) and virtual node degree (right) in the steady state in a quantum network with no swaps, cutoff $t_\mathrm{cut}=10/p_\mathrm{cons}$ time steps, and $r=1,2,5,10$ qubits per node per physical channel (from top to bottom). All curves were calculated using Equations (\ref{eq.expected_vi_noswaps}) and (\ref{eq.expected_ki_noswaps}).
	}
	\label{fig.analytical_no_swaps_extended}
\end{figure}

Lastly, Figure \ref{fig.convergence_to_analytical} shows an example of the convergence of $\mathbb{E}\big[v_i(t) \big ]$ and $\mathbb{E}\big[k_i(t) \big ]$ to $v_i$ and $k_i$ over time, respectively. The time-dependent quantities have been calculated using a simulation on a quantum network with a $(2,3)$-tree physical topology. The dashed lines correspond to the steady-state values in the absence of swaps predicted by (\ref{eq.expected_vi_noswaps}) and (\ref{eq.expected_ki_noswaps}).

\begin{figure}[t!]
\captionsetup[subfigure]{justification=centering}
     \centering
     \begin{subfigure}[b]{0.4\textwidth}
         \centering
         \includegraphics[width=\textwidth]{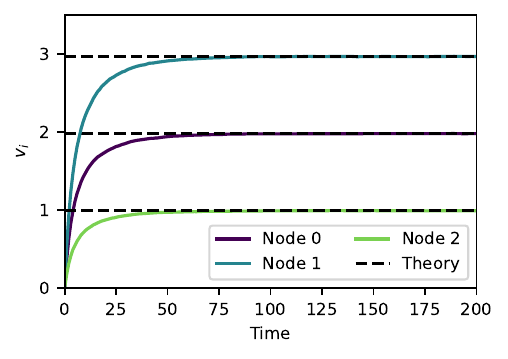}
	\vspace{-1.5\baselineskip}
         \caption{Expected virtual neighborhood size.}
         \vspace{5pt}
     \end{subfigure}
     \begin{subfigure}[b]{0.4\textwidth}
         \centering
         \includegraphics[width=\textwidth]{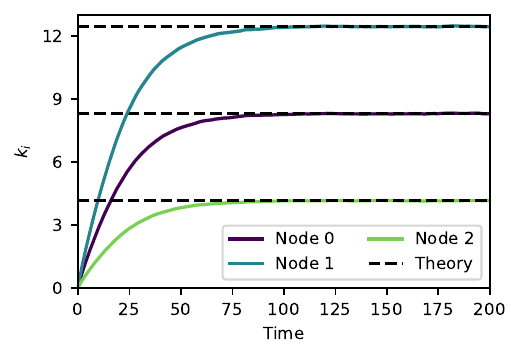}
	\vspace{-1.5\baselineskip}
         \caption{Expected virtual node degree.}
         \vspace{5pt}
     \end{subfigure}

	\caption{\textbf{The expected virtual neighborhood size and the expected virtual node degree converge to the steady-state analytical values in the absence of swaps.}
	In this example, we ran the SRS protocol (Algorithm \ref{alg.srs} from the main text) with $q=0$ (i.e., no swaps) on quantum network with $(2,3)$-tree topology.
	Nodes 0, 1, and 2 correspond to nodes in levels 0, 1, and 2 of the tree, respectively (i.e., they have physical node degrees $d_0=2$, $d_1=3$, and $d_2=1$, respectively).
	Each solid line corresponds to each of the three nodes.
	The dashed lines correspond to the expected steady-state values predicted by Equations (\ref{eq.expected_vi_noswaps}) and (\ref{eq.expected_ki_noswaps}) for each of the nodes.
	The standard error for each solid line is shown as a shaded region, although it is hard to notice since its maximum value is 0.017 in \textbf{(a)} and 0.056 in \textbf{(b)}.
	Other parameters used in this experiment are $p_\mathrm{gen}=0.2$, $F_\mathrm{new} = 0.9$, $r=5$, $T = 2000$ time steps, $p_\mathrm{cons}=0.1$, $F_\mathrm{app}=0.6$, $t_\mathrm{cut} = 221$ time steps (given by (\ref{eq.cutoffcondition})).
	Numerical results obtained using a network simulation and Monte Carlo sampling with $10^4$ samples.
	}
	\label{fig.convergence_to_analytical}
\end{figure}
\vspace{20pt}

\clearpage
\section{Steady state of a stochastic process}\label{app.numerical-steady-state}
In this Appendix we provide an algorithm to find the steady-state expected value of a stochastic process given a set of samples.
In our work, we employ this algorithm to estimate the steady-state expected value of the virtual neighborhood size, $\lim_{t\rightarrow\infty}\mathbb{E}\big[ v_i(t) \big]$, and the virtual node degree, $\lim_{t\rightarrow\infty}\mathbb{E}\big[ k_i(t) \big]$, from numerical simulations.

Finding the steady state of a stochastic process using realizations of the process is not a trivial task.
Algorithm \ref{alg.steadystate} can be used to estimate the start of the steady state of a stochastic process given $N$ realizations of the process.
The algorithm ensures that the expected values of the process at any two times in the steady state are arbitrarily close with a large probability.
We provide formal definitions and a proof below.

\begin{algorithm}
\caption{- Steady state estimation.}\label{alg.steadystate}
\begin{flushleft}
	\vspace{10pt}
	\textbf{Inputs:} \\
	\begin{itemize}[label=-]
		\item $\overline X_{\scriptscriptstyle N}(t)$, $t = t_0, t_1, ..., t_{\scriptscriptstyle M-1}$: sample mean of a stochastic process $X(t)$ over $N$ realizations at $t = t_0, t_1, t_2, ..., t_{\scriptscriptstyle M-1}$.
		\item $a$: minimum value of the stochastic process $X(t)$.
		\item $b$: maximum value of the stochastic process $X(t)$.
		\item $w$: minimum size of the steady state window.
	\end{itemize}
	\vspace{10pt}
	\textbf{Outputs:}
	\begin{itemize}[label=-]
		\item $\alpha$: the steady state is assumed to start at $t = t_\alpha$. The protocol aborts if it is not possible to find an $\alpha$ such that $\alpha \leq M-w$.
	\end{itemize}
	\vspace{10pt}
	\textbf{Algorithm:}
\end{flushleft}
\begin{algorithmic}[1]
	\State Define the error as $\varepsilon \leftarrow \frac{b-a}{\sqrt{N}}$.
	\State Define the steady state window: $W \leftarrow \{M-w, M-w+1, M-w+2, ..., M-1\}$. 
	\State Calculate $\Delta_{ij} \leftarrow 2\varepsilon - | \overline X_{\scriptscriptstyle N}(t_i) - \overline X_{\scriptscriptstyle N}(t_j) |$, $\forall i,j\in W$ and $i\neq j$.
	\State If $\Delta_{ij} < \frac{3}{2}\varepsilon$ for any $i,j$, then \textbf{abort} (steady state not found).
	\For{z in [1, 2, ..., M-w]}
		\State $k \leftarrow M-w-z$.
		\State Calculate $\Delta_{ik} \leftarrow 2\varepsilon - | \overline X_{\scriptscriptstyle N}(t_i) - \overline X_{\scriptscriptstyle N}(t_k) |$, $\forall i\in W$.
		\State If $\Delta_{ik} < \frac{3}{2}\varepsilon$ for any $i$, then $\alpha \leftarrow k+1$ and go to step 12.
		\State $W \leftarrow W \cup \{k\}$
	\EndFor
	\State $\alpha \leftarrow k$.
	\State \textbf{return} $\alpha$.
\end{algorithmic}
\end{algorithm}

\begin{theorem}
	Let $X(t) \in [a, b]$, with $a,b\in\mathbb{R}$, be a stochastic process with constant steady-state mean, i.e., $\lim_{t\rightarrow\infty} \mathbb{E}\big[X(t)\big] = X_\infty < \infty$.
	Let $\overline X_{\scriptscriptstyle N}(t_k)$ be a sample mean over $N$ samples at time $t_k \in \{t_0, t_1, ..., t_{\scriptscriptstyle M-1} \}$, with $t_0 < t_1 < ... < t_{\scriptscriptstyle M-1}$. Consider a minimum size of the steady-state window $w$.
	When $N\rightarrow\infty$, Algorithm \ref{alg.steadystate} with inputs $\overline X_{\scriptscriptstyle N}(t_k)$, $a$, $b$, and $w$, finds $\alpha$ such that
	$$\mathrm{Pr}\Big[\; \mathbb{E}\big[X(t_i)\big] \in \mathrm{IC}_{ij} \;\Big] \geq 0.815, \; \forall i,j\geq\alpha$$
	for an interval of confidence $\mathrm{IC}_{ij} = \Big(\; \mathrm{max}\big(\overline X_{\scriptscriptstyle N}(t_i), \overline X_{\scriptscriptstyle N}(t_j)\big) - \varepsilon, \; \mathrm{min}\big(\overline X_{\scriptscriptstyle N}(t_i), \overline X_{\scriptscriptstyle N}(t_j)\big) + \varepsilon \;\Big)$, with $\varepsilon = \frac{b-a}{\sqrt{N}}$,
	or the algorithm aborts.
\end{theorem}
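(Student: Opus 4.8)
The plan is to combine three ingredients: a distribution-free bound on the variance of $X(t)$, the central limit theorem (CLT) for the sample mean across the $N$ independent realizations, and the deterministic closeness property that Algorithm~\ref{alg.steadystate} enforces on the returned window. Throughout I would abbreviate $\overline X_i \equiv \overline X_{\scriptscriptstyle N}(t_i)$, $\mu_i\equiv\mathbb{E}[X(t_i)]$, and write $D_i\equiv\overline X_i-\mu_i$ for the signed sampling error. Since $X(t_i)\in[a,b]$, the Bhatia--Davis inequality \cite{Bhatia2000} gives $\mathrm{Var}[X(t_i)]\le(b-\mu_i)(\mu_i-a)\le(b-a)^2/4$, so the standard deviation $s_i\equiv\sqrt{\mathrm{Var}[X(t_i)]/N}$ of $D_i$ satisfies $s_i\le\tfrac{b-a}{2\sqrt N}=\varepsilon/2$. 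By the CLT, as $N\to\infty$ the normalized error $D_i/s_i$ converges in distribution to a standard normal $Z$. A smaller $s_i$ only widens the normal events considered below, so the worst case (smallest probability) is $s_i=\varepsilon/2$.

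Next I would extract the deterministic guarantee. If the algorithm returns $\alpha$ rather than aborting, then every pairwise test $\Delta_{ij}=2\varepsilon-|\overline X_i-\overline X_j|\ge\tfrac32\varepsilon$ passed, that is $|\overline X_i-\overline X_j|\le\tfrac12\varepsilon$ for all $i,j\ge\alpha$. A short induction on the backward window-extension loop establishes this for every pair and not just pairs touching the initial window: when index $k$ is appended in step~9, step~7 has already compared it against every index then in $W$, and any later-appended index is smaller, so all relevant comparisons occur. Fixing such a pair $(i,j)$ and assuming without loss of generality $\overline X_i\ge\overline X_j$, the interval unfolds to $\mathrm{IC}_{ij}=(\overline X_i-\varepsilon,\ \overline X_j+\varepsilon)$. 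Using $\overline X_j\ge\overline X_i-\tfrac12\varepsilon$, one checks that $\mu_i>\overline X_i-\varepsilon$ is equivalent to $D_i<\varepsilon$, while $\mu_i<\overline X_i+\tfrac12\varepsilon$ (which forces $\mu_i<\overline X_j+\varepsilon$) is equivalent to $D_i>-\tfrac12\varepsilon$; hence $-\tfrac12\varepsilon<D_i<\varepsilon$ implies $\mu_i\in\mathrm{IC}_{ij}$. The opposite ordering yields the mirror condition $-\varepsilon<D_i<\tfrac12\varepsilon$, which carries the same probability by symmetry of $Z$.

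Finally I would evaluate the probability in the worst case $s_i=\varepsilon/2$: the event $\{-\tfrac12\varepsilon<D_i<\varepsilon\}$ maps to $\{-1<Z<2\}$, and $\Pr[-1<Z<2]=\Phi(2)-\Phi(-1)\approx0.819>0.815$; letting $N\to\infty$ turns the CLT approximation into equality, and smaller $s_i$ only increases this value. I expect the main obstacle to be the asymmetry exploited in the second paragraph: bounding $\mu_i$ symmetrically around $\overline X_i$ would only give $\Pr[|Z|<1]\approx0.68$, so the jump to $0.815$ relies on the fact that the closeness bound $|\overline X_i-\overline X_j|\le\tfrac12\varepsilon$ relaxes the binding constraint on the $\overline X_j$ side. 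A secondary subtlety I would flag explicitly is that the ordering event $\{\overline X_i\ge\overline X_j\}$ and the non-abort event are correlated with $D_i$; the clean argument treats the limiting normal law of $D_i$ as the marginal and uses the realized closeness bound as a deterministic fact about the returned window, which is precisely where the $N\to\infty$ limit and the worst-case standard error do the work.
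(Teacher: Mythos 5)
Your proposal is correct and follows essentially the same route as the paper's proof: the Bhatia--Davis bound $\sigma(t)\le(b-a)/2$, the CLT for the sample mean, the deterministic consequence $|\overline X_{\scriptscriptstyle N}(t_i)-\overline X_{\scriptscriptstyle N}(t_j)|\le\tfrac12\varepsilon$ of the acceptance test $\Delta_{ij}\ge\tfrac32\varepsilon$, and the asymmetric interval argument yielding $\mathrm{Pr}[-1<Z<2]\approx0.819\ge0.815$ (the paper reaches the same number by splitting the integral into a $0.475$ piece and a $0.34$ piece). Your two added observations --- the induction showing every pair in the returned window was actually compared, and the caveat that the non-abort event is correlated with the sampling error --- are points the paper glosses over, but they refine rather than change the argument.
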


\begin{proof}
Let us consider a stochastic process $X(t) \in [a, b]$ with constant steady-state mean, i.e., $\lim_{t\rightarrow\infty} \mathbb{E}\big[X(t)\big] = X_\infty < \infty$, and with finite variance $\sigma(t)^2$.
Assume that we have $N$ realizations of the process where we took samples at times $t_0 < t_1 < t_2, \dots$.
We denote the value taken in realization $n\in \{0,1,...,N-1\}$ at time $t$ as $x_n(t)$.
We define the sample average as
\begin{equation}
	\overline X_{\scriptscriptstyle N}(t) = \frac{1}{N}\sum_{n=0}^{N-1} x_n(t).
\end{equation}

The Central Limit Theorem states that the distribution of the random variable $\sqrt{N} \big( \overline X_{\scriptscriptstyle N}(t) - \mathbb{E}\big[X(t)\big] \big)$ converges to a normal distribution $\mathcal{N}(0,\sigma(t)^2)$ as $N$ approaches infinity.
After rescaling and shifting this distribution, we find that $\mathbb{E}\big[X(t)\big]$ converges to a normal distribution $\mathcal{N}\big(\overline X_{\scriptscriptstyle N}(t), \sigma(t)^2/N\big)$ as $N$ approaches infinity.
By the properties of the normal distribution,
\begin{equation}\label{eq.normal95}
	\mathrm{Pr}\Bigg[ \mathbb{E}\big[X(t)\big] \in \bigg( \overline X_{\scriptscriptstyle N}(t) - \frac{2\sigma(t)}{\sqrt{N}}, \; \overline X_{\scriptscriptstyle N}(t) + \frac{2\sigma(t)}{\sqrt{N}} \bigg) \Bigg] > 0.95.
\end{equation}
The values of $X(t)$ are constrained to the interval $[a,b]$, and therefore the standard deviation is upper bounded by \cite{Bhatia2000}
\begin{equation}\label{eq.boundstd}
	\sigma(t) \leq (b-a)/2.
\end{equation}
Let us define the error as $\varepsilon = \frac{b-a}{\sqrt{N}}$, and the interval of confidence for the expected value of $X(t_i)$ as
\begin{equation}\label{eq.IC_i}
	\mathrm{IC}_i = \Big( \overline X_{\scriptscriptstyle N}(t_i) - \varepsilon, \; \overline X_{\scriptscriptstyle N}(t_i) + \varepsilon \Big)
\end{equation}

Using (\ref{eq.normal95}), (\ref{eq.boundstd}), and (\ref{eq.IC_i}), we can write
\begin{equation}
	\mathrm{Pr}\Big[\mathbb{E}\big[X(t_i)\big] \in \mathrm{IC}_i  \Big] > 0.95.
\end{equation}

This result means that the expected value is arbitrarily close to the sample mean with high probability. Next, we need to show that any two expected values in the time window defined by the algorithm are arbitrarily close to each other to conclude that the window captures the steady-state behavior.

Let us define the interval of confidence $ij$ as the overlap in the intervals of confidence for the expected values of $X(t_i)$ and $X(t_j)$:
\begin{equation}\label{eq.IC_ij}
	\mathrm{IC}_{ij} = \Big(\; \mathrm{max}\big(\overline X_{\scriptscriptstyle N}(t_i), \overline X_{\scriptscriptstyle N}(t_j)\big) - \varepsilon, \; \mathrm{min}\big(\overline X_{\scriptscriptstyle N}(t_i), \overline X_{\scriptscriptstyle N}(t_j)\big) + \varepsilon \;\Big).
\end{equation}
The size of this interval of confidence is
\begin{equation}\label{eq.Delta_ij}
	\Delta_{ij} = 2\varepsilon - \big|\overline X_{\scriptscriptstyle N}(t_i) - \overline X_{\scriptscriptstyle N}(t_j)\big|.
\end{equation}
We provide a graphical intuition in Figure \ref{fig.intervals-of-confidence}.

\begin{figure}[h]
\captionsetup[subfigure]{}
\centering
  \centering
  \includegraphics[width=0.5\linewidth]{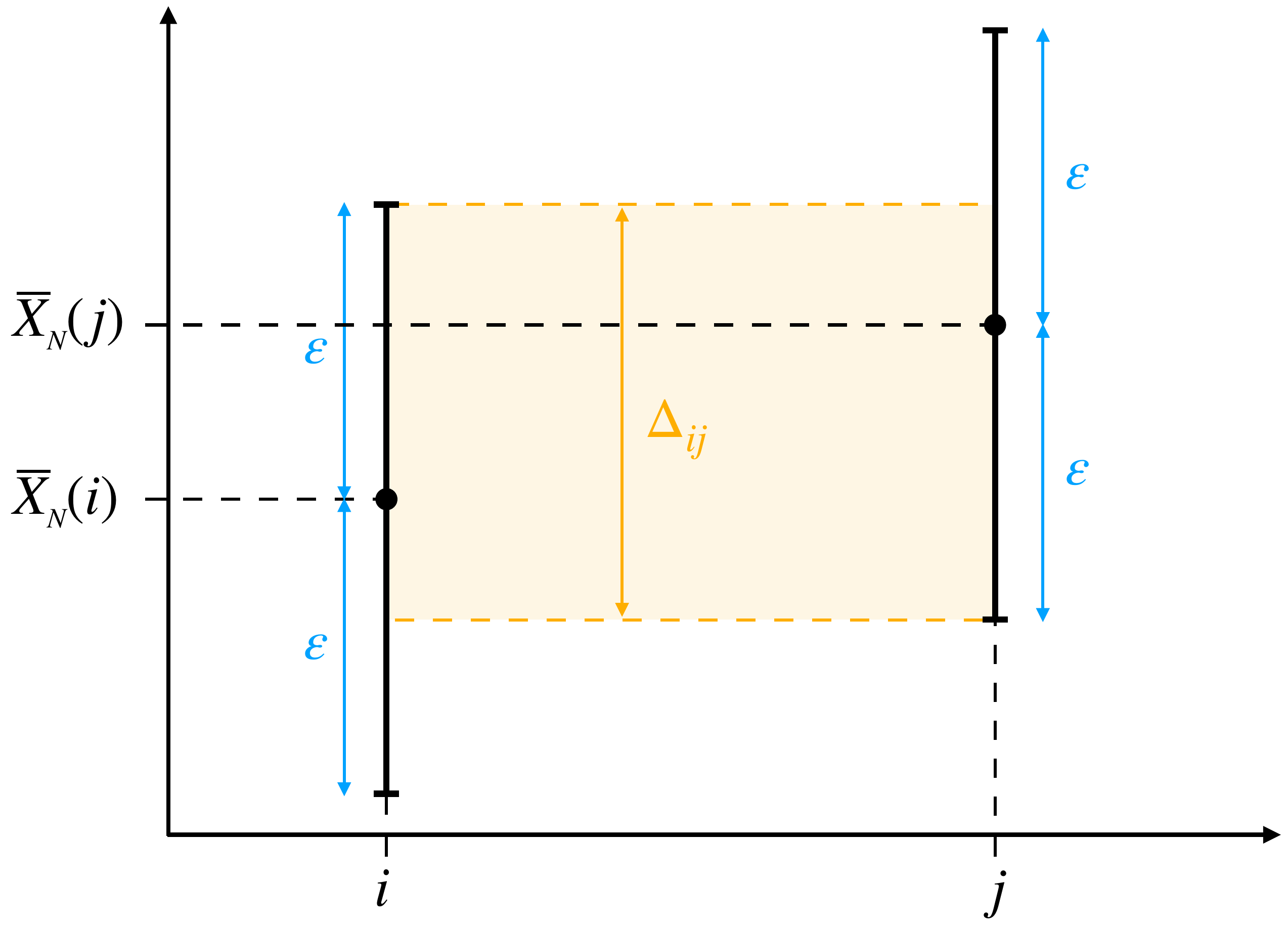}
\caption{\textbf{Graphical intuition for the interval of confidence $ij$ used to identify the steady state.}
	$\overline X_{\scriptscriptstyle N}(i)$ corresponds to the sample mean at $i$, $\varepsilon$ is the error, and $\Delta_{ij}$ is the size of the interval of confidence $ij$ (highlighted in yellow).}
\label{fig.intervals-of-confidence}
\end{figure}

Algorithm \ref{alg.steadystate} finds the smallest $\alpha$ such that $\alpha \leq M-w$ and $\Delta_{ij}\geq \frac{3}{2}\varepsilon$, for any $i,j>\alpha$. Then, we say that the steady state starts at $t_\alpha$. If $\alpha$ does not exist, the algorithm aborts.
Next, we show that the condition stated in the theorem,
\begin{equation}
	\mathrm{Pr}\Big[\; \mathbb{E}\big[X(t_i)\big] \in \mathrm{IC}_{ij} \;\Big] \geq 0.815, \; \forall i,j\geq\alpha,
\end{equation}
is equivalent to $\Delta_{ij}\geq \frac{3}{2}\varepsilon$, for any $i,j>\alpha$.
We proceed as follows:
\begin{equation}
\begin{split}
	\mathrm{Pr}\Big[\; \mathbb{E}\big[X(t_i)\big] \in \mathrm{IC}_{ij} \;\Big]
	&\stackrel{a}{=} \mathrm{Pr}\Big[\; \mathbb{E}\big[X(t_i)\big] \in \Big(\; \overline X_{\scriptscriptstyle N}(t_i) - \varepsilon, \; \overline X_{\scriptscriptstyle N}(t_i) + \Delta_{ij} - \varepsilon \;\Big) \;\Big]\\
	&\stackrel{b}{=} \int_{\overline X_{\scriptscriptstyle N}(t_i) - \varepsilon}^{\overline X_{\scriptscriptstyle N}(t_i)} f_{\scriptscriptstyle N}(x_i) \mathrm{d}x_i + \int_{\overline X_{\scriptscriptstyle N}(t_i)}^{\overline X_{\scriptscriptstyle N}(t_i) + \Delta_{ij} - \varepsilon} f_{\scriptscriptstyle N}(x_i) \mathrm{d}x_i \\
	&\stackrel{c}{\geq} \int_{\overline X_{\scriptscriptstyle N}(t_i) - \frac{2\sigma(t)}{\sqrt{N}}}^{\overline X_{\scriptscriptstyle N}(t_i)} f_{\scriptscriptstyle N}(x_i) \mathrm{d}x_i + \int_{\overline X_{\scriptscriptstyle N}(t_i)}^{\overline X_{\scriptscriptstyle N}(t_i) + \Delta_{ij} - \varepsilon} f_{\scriptscriptstyle N}(x_i) \mathrm{d}x_i \\
	&\stackrel{d}{\geq} \frac{0.95}{2} + \int_{\overline X_{\scriptscriptstyle N}(t_i)}^{\overline X_{\scriptscriptstyle N}(t_i) + \Delta_{ij} - \varepsilon} f_{\scriptscriptstyle N}(x_i) \mathrm{d}x_i \\
	&\stackrel{e}{\geq} 0.475 + \int_{\overline X_{\scriptscriptstyle N}(t_i)}^{\overline X_{\scriptscriptstyle N}(t_i) + \frac{\varepsilon}{2}} f_{\scriptscriptstyle N}(x_i) \mathrm{d}x_i \\
	&\stackrel{f}{\geq} 0.475 + \int_{\overline X_{\scriptscriptstyle N}(t_i)}^{\overline X_{\scriptscriptstyle N}(t_i) + \frac{\sigma}{\sqrt{N}}} f_{\scriptscriptstyle N}(x_i) \mathrm{d}x_i \\
	&\stackrel{g}{\geq} 0.475 +\frac{0.68}{2}\\
	&= 0.815
\end{split}
\end{equation}
with the following steps:
\begin{enumerate}[label=\alph*.]
	\item Without loss of generality, assume $\overline X_{\scriptscriptstyle N}(t_i) \geq \overline X_{\scriptscriptstyle N}(t_j)$.
	\item Let $f_{\scriptscriptstyle N}(x_i)$ be the probability distribution function of $\mathbb{E}\big[X(t_i)\big]$. As previously shown, when $N$ goes to infinity, this distribution converges to a normal distribution $\mathcal{N}\big(\overline X_{\scriptscriptstyle N}(t_i), \sigma(t)^2/N\big)$. We assume $N$ is sufficiently large.
	\item Using (\ref{eq.boundstd}): $\varepsilon = \frac{b-a}{\sqrt{N}} \geq \frac{2\sigma(t)}{\sqrt{N}}$.
	\item The probability that a normally distributed random variable takes a value between the mean and two standard deviations away is larger than $\frac{0.95}{2}$, i.e., $\int_{\mu-2\sigma}^{\mu} f(z) dz = \int_{\mu}^{\mu+2\sigma} f(z) dz \geq \frac{0.95}{2}$, where $f(z)$ is the probability distribution of $Z\sim\mathcal{N}(\mu,\sigma^2)$.
	\item The algorithm only considers $i$ and $j$ such that $\Delta_{ij}\geq\frac{3}{2}\varepsilon$.
	\item Using (\ref{eq.boundstd}) again: $\varepsilon \geq \frac{2\sigma(t)}{\sqrt{N}}$.
	\item The probability that a normally distributed random variable takes a value between the mean and one standard deviation away is larger than $\frac{0.68}{2}$, i.e., $\int_{\mu-\sigma}^{\mu} f(z) dz = \int_{\mu}^{\mu+\sigma} f(z) dz \geq \frac{0.68}{2}$, where $f(z)$ is the probability distribution of $Z\sim\mathcal{N}(\mu,\sigma^2)$.
\end{enumerate}
\end{proof}

Note that the validity of this method depends on the number of samples $N$, which must be sufficiently large in order to apply the Central Limit Theorem.

In our simulations, we employ Algorithm \ref{alg.steadystate} to check the existence of the steady state in the virtual neighborhood size, $v_i(t)$, and the virtual node degree, $k_i(t)$, of every node $i$.
After identifying the steady state, we take the average at the final simulation time as an estimate for the expected steady-state value, i.e., $\lim_{t\rightarrow\infty}\mathbb{E}\big[ v_i(t) \big] \approx \overline v_{i,{\scriptscriptstyle N}}(t_{\scriptscriptstyle M-1})$ and $\lim_{t\rightarrow\infty}\mathbb{E}\big[ k_i(t) \big] \approx \overline k_{i,{\scriptscriptstyle N}}(t_{\scriptscriptstyle M-1})$, where $\overline v_{i,{\scriptscriptstyle N}}(t)$ and $\overline k_{i,{\scriptscriptstyle N}}(t)$ are the sample averages at time $t$.
The virtual neighborhood size of node $i$ is upper bounded by $b=\min(rd_i, n)$, where $rd_i$ is the total number of qubits at node $i$ and $n$ is the total number of nodes.
The virtual degree of node $i$ is upper bounded by $b=rd_i$.
In this work, each simulation was run over $10 t_\mathrm{cut}$ time steps, and the window used to estimate the steady state was $w = 2 t_\mathrm{cut}$.

When the standard error is very small and the mean value is slowly converging to the steady-state value, the overlaps between intervals of confidence ($\Delta_{ij}$) may be too small.
Then, our algorithm may abort, indicating that there is not steady state.
In practice, we would like the algorithm to declare that the steady state has been reached once we are close enough to the steady-state value.
To prevent the algorithm from aborting in such a situation, we can increase the value of $b$ to increase the size of the interval of confidence ($\varepsilon$) in the algorithm.

We considered employing other data analysis techniques, such as bootstrapping and data blocking \cite{Thijssen2007}, to improve our estimates.
However, we decided to not use them since ($i$) bootstrapping would require running the simulations over many more time steps to be able to take many samples spaced an autocorrelation time; and ($ii$) data blocking requires a much larger storage space.

As a final remark, we measure the error in the estimate of the expected steady-state values using the standard error $\epsilon = s_{\scriptscriptstyle N}/\sqrt{N}$, where $s_{\scriptscriptstyle N}$ is the sample standard deviation.
In particular, the error bars used in this work correspond to $\pm 2 \epsilon$, which provide a $95\%$ interval of confidence.
\\

Figure \ref{fig.example-steady-state-finder} shows an example of our algorithm finding the steady state of the virtual neighborhood size when running the SRS protocol in a network with a tree topology.
The virtual neighborhood size of three nodes is shown in different colors. Dots correspond to the time $t_\alpha$ at which the algorithm declares that the steady state has been reached.
\\

\begin{figure}[h]
\captionsetup[subfigure]{}
\centering
  \centering
  \includegraphics[width=0.7\linewidth]{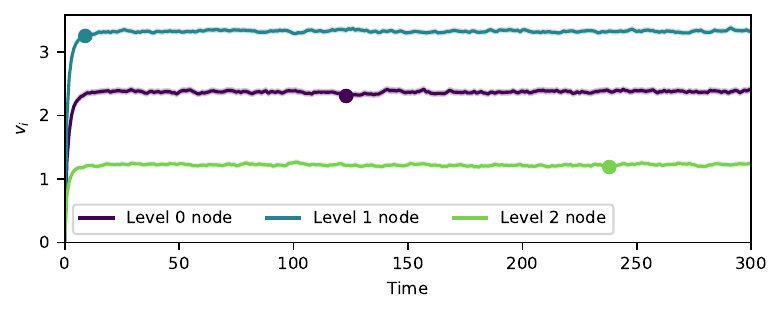}
\caption{\textbf{Algorithm \ref{alg.steadystate} can identify the steady state of a stochastic process.}
	Evolution of the average virtual neighborhood size in a quantum network with a $(2,3)$-tree topology running the SRS protocol described in the main text.
	Each line (purple, blue, and green) corresponds to a node in a different level of the tree (level 0, 1, and 2).
	Dots indicate that the steady state has been reached, according to Algorithm \ref{alg.steadystate}.
	The error for each solid line is shown as a shaded region, although it is hard to notice since its maximum value is 0.029 (the error is defined as $2\hat\sigma/N_\mathrm{samples}$, where $\hat\sigma$ is the sample standard deviation and $N_\mathrm{samples}$ is the number of samples).
	Other parameters used in this experiment: $p_\mathrm{gen}=0.9$, $F_\mathrm{new} = 0.88$, $p_\mathrm{swap}=1$, $r=5$, $T = 2000$ time steps, $M=4$, $p_\mathrm{cons}=0.225$, $q=0.1$, $F_\mathrm{app}=0.6$, $t_\mathrm{cut} = 56$ time steps.
	Numerical results obtained using a network simulation and Monte Carlo sampling with $10^3$ samples.
	The simulation was run over 560 time steps (only the first 300 are shown here) and the steady-state window was 112 time steps.}
\label{fig.example-steady-state-finder}
\end{figure}
\vspace{20pt}

\clearpage
\section{Extra experiments on a tree network}\label{app.extra}
Here, we provide more examples of the dependence of the virtual neighborhood size, $v_i$, and the virtual node degree, $k_i$, on the SRS protocol parameter $q$ (probability that a node performs a swap).
In the main text, we discuss the dependence on $q$ using a network with the following baseline set of parameters: $(2,3)$-tree topology, $p_\mathrm{gen}=0.9$, $F_\mathrm{new} = 0.888$, $p_\mathrm{swap} = 1$, $r=5$, $T = 2000$ time steps, $M=4$, $p_\mathrm{cons}=0.225$, $F_\mathrm{app} = 0.6$, $t_\mathrm{cut} = 56$ time steps.
Figure \ref{fig.extra_qdependence_trees} shows similar plots for networks with slightly different combinations of parameters that correspond to larger trees, smaller consumption rate, and probabilistic swapping.
In all situations we observe the same qualitative behavior as in the baseline case: the value of $q$ that maximizes the virtual neighborhood size is node-dependent, and $k_i$ is monotonically decreasing with increasing $q$.

\begin{figure}[ht!]
\captionsetup[subfigure]{justification=centering}
     \centering
     \begin{subfigure}[b]{0.4\textwidth}
         \centering
         \includegraphics[width=0.75\textwidth]{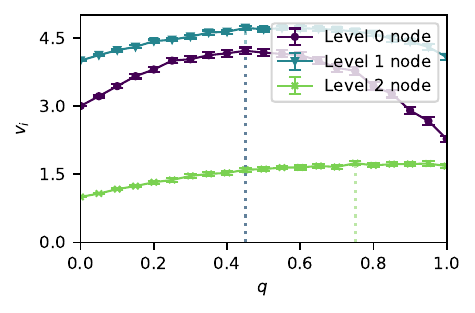}
	\vspace{-0.8\baselineskip}
         \caption{Virtual neighborhood size ($(3,3)$-tree).}
         \vspace{5pt}
     \end{subfigure}
     \hspace{40pt}
     \begin{subfigure}[b]{0.4\textwidth}
         \centering
         \includegraphics[width=0.75\textwidth]{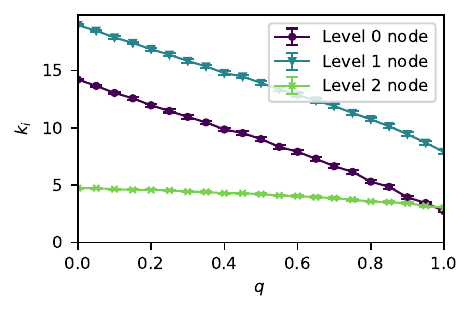}
	\vspace{-0.8\baselineskip}
         \caption{Virtual node degree ($(3,3)$-tree).}
         \vspace{5pt}
     \end{subfigure}

     \begin{subfigure}[b]{0.4\textwidth}
              \centering
         \includegraphics[width=0.75\textwidth]{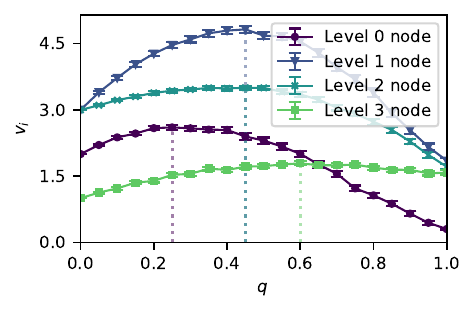}
	\vspace{-0.8\baselineskip}
         \caption{Virtual neighborhood size ($(2,4)$-tree).}
         \vspace{5pt}
     \end{subfigure}
     \hspace{40pt}
     \begin{subfigure}[b]{0.4\textwidth}
         \centering
         \includegraphics[width=0.75\textwidth]{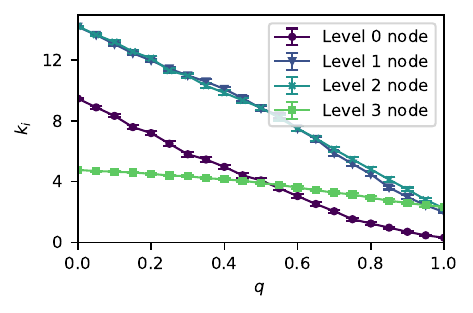}
	\vspace{-0.8\baselineskip}
         \caption{Virtual node degree ($(2,4)$-tree).}
         \vspace{5pt}
     \end{subfigure}
     
     \begin{subfigure}[b]{0.4\textwidth}
              \centering
         \includegraphics[width=0.75\textwidth]{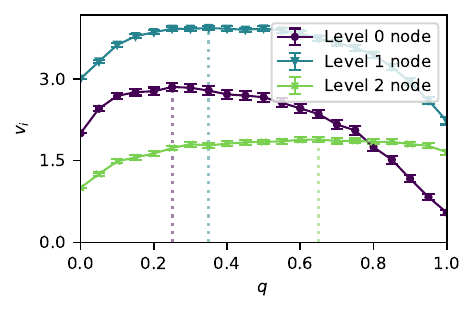}
	\vspace{-0.8\baselineskip}
         \caption{Virtual neighborhood size ($p_\mathrm{cons}=0.1$).}
         \vspace{5pt}
     \end{subfigure}
     \hspace{40pt}
     \begin{subfigure}[b]{0.4\textwidth}
         \centering
         \includegraphics[width=0.75\textwidth]{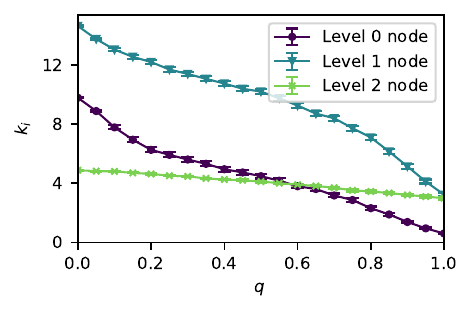}
	\vspace{-0.8\baselineskip}
         \caption{Virtual node degree ($p_\mathrm{cons}=0.1$).}
         \vspace{5pt}
     \end{subfigure}
     
     \begin{subfigure}[b]{0.4\textwidth}
              \centering
         \includegraphics[width=0.75\textwidth]{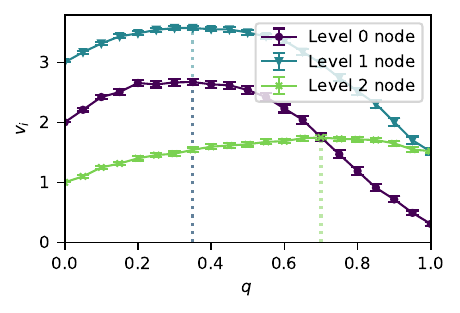}
	\vspace{-0.8\baselineskip}
         \caption{Virtual neighborhood size ($p_\mathrm{cons}=0.1$, $p_\mathrm{swap}=0.5$).}
         \vspace{5pt}
     \end{subfigure}
     \hspace{40pt}
     \begin{subfigure}[b]{0.4\textwidth}
         \centering
         \includegraphics[width=0.75\textwidth]{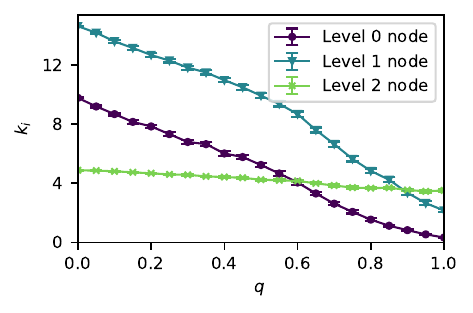}
	\vspace{-0.8\baselineskip}
         \caption{Virtual node degree ($p_\mathrm{cons}=0.1$, $p_\mathrm{swap}=0.5$).}
     \end{subfigure}
     
	\caption{\textbf{Our performance metrics show the same qualitative behavior for different combinations of parameters.}
	Expected virtual neighborhood size \textbf{(a, c, e, g)} and virtual node degree \textbf{(b, d, f, h)} in the steady state in a tree network running the SRS protocol vs the protocol parameter $q$.
	The value of $q$ that maximizes the virtual neighborhood size is indicated by the dotted lines.
	Baseline parameters:
	$(2,3)$-tree topology, $p_\mathrm{gen}=0.9$, $F_\mathrm{new} = 0.888$, $p_\mathrm{swap} = 1$, $r=5$, $T = 2000$ time steps, $M=4$, $p_\mathrm{cons}=0.225$, $F_\mathrm{app} = 0.6$, $t_\mathrm{cut} = 56$ time steps.
	 The subfigures in each row correspond to a different experiment (each caption indicates the parameters that have a different value in that experiment).
	Results obtained using a network simulation and Monte Carlo sampling with $10^3$ samples. The error in the error bars is defined as $2\hat\sigma/N_\mathrm{samples}$, where $\hat\sigma$ is the sample standard deviation and $N_\mathrm{samples}$ is the number of samples.
	}
	\label{fig.extra_qdependence_trees}
\end{figure}
\vspace{20pt}



\end{document}